\theoremstyle{plain}
\newtheorem{proposition}{Proposition}[section]
\newtheorem{theorem}[proposition]{Theorem}
\theoremstyle{definition}
\newtheorem{definition}[proposition]{Definition}
\theoremstyle{remark}
\newtheorem{remark}[proposition]{Remark}
\newcommand{\rhs}{r.h.s.\ }
\newcommand{\wrt}{w.r.t.\ }
\newcommand{\cf}{cf.\ }
\newcommand{\ud}{\mathrm{d}}
\newcommand{\del}{\partial}
\DeclareMathOperator{\supp}{supp}
\DeclareMathOperator{\Sym}{Sym}
\newcommand{\betrag}[1]{{\lvert #1 \rvert}}
\newcommand{\R}{\mathbb{R}}
\newcommand{\C}{\mathbb{C}}
\newcommand{\Z}{\mathbb{Z}}
\newcommand{\E}{\mathfrak{E}}
\newcommand{\F}{\mathfrak{F}}
\newcommand{\A}{\mathfrak{A}}
\newcommand{\D}{\mathfrak{D}}
\newcommand{\skal}[2]{\langle #1 , #2 \rangle}
\newcommand{\order}{\mathcal{O}}
\newcommand{\id}{\mathrm{id}}
\newcommand{\1}{\mathbbm{1}}
\newcommand{\eps}{\varepsilon}
\newcommand{\diag}{\mathcal{D}}
\newcommand{\Spin}{\mathrm{Spin}}
\newcommand{\Lor}{\mathrm{Lor}}
\newcommand{\Pei}[2]{\lfloor #1, #2 \rfloor}
\newcommand{\WDp}[1]{\colon \negthickspace #1 \! \colon \negthickspace }
\DeclareMathOperator{\WF}{WF}
\DeclareMathOperator{\Had}{Had}
\DeclareMathOperator{\tr}{tr}
\newcommand{\N}{\mathbb{N}}
\newcommand{\reg}{{\mathrm{reg}}}
\newcommand{\loc}{{\mathrm{loc}}}
\newcommand{\ret}{{\mathrm{ret}}}
\newcommand{\adv}{{\mathrm{adv}}}
\newcommand{\CatVec}{\mathbf{Vec}}
\newcommand{\CatAlg}{\mathbf{Alg}}
\newcommand{\CatMan}{\mathbf{Man}}
\newcommand{\CatSpMan}{\mathbf{SpMan}}
\newcommand{\CatGSpMan}{\mathbf{GSpMan}}
\newcommand{\g}{\mathfrak{g}}
\newcommand{\T}{\mathfrak{T}}
\newcommand{\MT}{\mathfrak{MT}}
\newcommand{\gt}{{\hat \otimes}}
\newcommand{\Cl}{\mathrm{Cl}}
\DeclareMathOperator{\Mat}{Mat}
\DeclareMathOperator{\ad}{ad}
\newcommand{\Lie}{\mathcal{L}}
\begin{document}

\begin{flushright}
UWThPh-2012-32
\end{flushright}
\hfill
\vspace{0.7cm}
\hfill
\begin{center}
{\LARGE The renormalized locally covariant Dirac field} \\
\hfill
\vspace{0.5cm}
\hfill \\
{\large Jochen Zahn \\  \vspace{0.3cm} Fakult\"at f\"ur Physik, Universit\"at Wien, \\ Boltzmanngasse 5, 1090 Wien, Austria \\ jochen.zahn@univie.ac.at \\
\hfill \\
\today \\}
\hfill \\
\end{center}

%\markboth{Jochen Zahn}
%{The renormalized locally covariant Dirac field}
%
%\title{The renormalized locally covariant Dirac field}
%%\author{Jochen Zahn \\  \\ Fakult\"at f\"ur Physik, Universit\"at Wien, \\ Boltzmanngasse 5, 1090 Wien, Austria. \\ jochen.zahn@univie.ac.at}
%
%\author{Jochen Zahn}
%\address{Fakult\"at f\"ur Physik, Universit\"at Wien, Boltzmanngasse 5, 1090 Wien, Austria.\\
%\email{jochen.zahn@univie.ac.at}}
%
%%\date{October 15, 2012}
%
%\maketitle
%
%\begin{history}
%\received{(Day Month Year)}
%\revised{(Day Month Year)}
%\end{history}

\begin{abstract}
The definition of the locally covariant Dirac field is adapted such that it may be charged under a gauge group and in the presence of generic gauge and Yukawa background fields. We construct renormalized Wick powers and time-ordered products. It is shown that the Wick powers may be defined such that the current and the stress-energy tensor are conserved, and the remaining ambiguity is characterized. We sketch a variant of the background field method that can be used to determine the renormalization group flow at the one loop level from the nontrivial scaling of Wick powers.
\end{abstract}

%\keywords{Quantum field theory over curved backgrounds; spinor fields.}

%\ccode{Mathematics Subject Classification 2010: 81T20, 81T05}

\section{Introduction}

The last one and a half decades saw an impressive revival of the theory of quantum fields on curved spacetimes. This was initiated by Radzikowski's discovery that Hadamard two-point functions can be equivalently characterized in terms of their wave front set \cite{Radzikowski}. This lead Brunetti, Fredenhagen and K\"ohler to the formulation of the \emph{microlocal spectrum condition} and the construction of Wick polynomials \cite{BrunettiFredenhagenKoehler}. Using a local renormalization scheme \`a la Epstein and Glaser and Steinmann's concept of the scaling degree, Brunetti and Fredenhagen were able to prove the perturbative renormalizability of the $\varphi^4$ model on generic spacetimes \cite{BrunettiFredenhagenScalingDegree}. What was missing was some means to compare field theories defined on different spacetimes, or, put differently, to define one theory coherently on all spacetimes. This was provided by the \emph{generally covariant locality principle} introduced by Brunetti, Fredenhagen and Verch \cite{BrunettiFredenhagenVerch}. This principle is naturally formulated in categorical language: One starts with the category $\CatMan$ of globally hyperbolic manifolds, with causal isometric embeddings as morphisms.\footnote{A categorical language was already used by Dimock \cite{DimockDirac}, who, however, used isometries as morphisms, instead of isometric embeddings. Hence, the crucial requirement of a local construction of algebras and fields is missing in that framework.} A locally covariant theory is then a functor from $\CatMan$ to the category of ($C^*$)-algebras with injective homomorphisms as morphisms. The concept of a locally covariant theory was essential for the definition of covariant Wick powers and time-ordered products due to Hollands and Wald \cite{HollandsWaldWick, HollandsWaldTO}.

The framework was also crucial for the proof of the spin-statistics theorem on curved backgrounds \cite{VerchSpinStatistics}. Examples of further applications are the discussion of quantum energy inequalities \cite{FewsterPfenningQEILocCovI} and the renormalization group in curved spacetimes \cite{HollandsWaldRG}. The framework was also used in the treatments of Yang--Mills gauge fields \cite{HollandsYM}, perturbative (classical) gravity \cite{RejznerFredenhagen}, and the quantization of submanifold embeddings \cite{BRZ}.

The locally covariant Dirac field was first considered by Verch \cite{VerchSpinStatistics}, and later worked out by Sanders \cite{SandersDirac}. The crucial point is the replacement of the category $\CatMan$ by the category $\CatSpMan$, which also captures the spin structure. However, the spacetime and the spin structure were the only allowed non-trivial backgrounds. Furthermore, only linear fields were incorporated, i.e., no Wick powers and time-ordered products. The latter problem was treated by Dappiaggi, Hack and Pinamonti, who provided a definition of Wick powers in order to be able to discuss backreaction effects through the semiclassical Einstein equation \cite{DHP09}. But their proposal has some shortcomings, to be commented on below.

The aim of the present paper is to generalize and extend the framework of Sanders. The generalization consists in allowing for non-trivial gauge and Yukawa background fields. This is achieved by further extending the underlying category $\CatSpMan$ to the category $\CatGSpMan$, which also includes the principal bundle corresponding to the gauge group, a gauge potential, and a scalar field (describing the Yukawa background). In particular, gauge transformations then correspond to morphisms of the category. 

We extend Sanders' work in that we also treat non-linear fields (Wick powers) and interactions (through time-ordered products). Building on the work of Rejzner \cite{RejznerFermions} on fermionic fields on Minkowski space, we work in the framework of perturbative algebraic quantum field theory (pAQFT) \cite{BDF09}, i.e., by deformation quantization of a graded commutative algebra of functionals.
The first step is to define the algebra of so-called \emph{microcausal} functionals. The crucial point is to show that Hadamard two-point functions exist, a result that is a rather straightforward generalization of results of Fewster and Verch \cite{FewsterVerchDirac}.
The next step is to define Wick powers. This is done via Hadamard parametrices, and the first task is to define what a covariant choice of a parametrix actually is. The next is then to show that parametrices exist.
Our treatment requires less assumptions than the existing ones \cite{SahlmannVerchHadamard, DHP09}, in that we allow for a coupling to non-trivial gauge and Yukawa backgrounds. 
Finally, we present a construction of time-ordered products, by a generalization of the work of Hollands and Wald for the scalar case \cite{HollandsWaldTO}.

We also provide some applications of the framework. We show that a conserved current operator can always be achieved and discuss the remaining renormalization freedom. This local and covariant definition of the current could also be useful for the study of backreaction effects in quantum electrodynamics on Minkowski space in the presence of an electromagnetic background field.
Furthermore, we show that, provided the nontrivial background consists only of gravity and a constant mass, there is no algebraic obstruction to achieving a conserved stress-energy tensor, for any spacetime dimension.
We also classify the remaining ambiguities, thereby proving a conjecture of \cite{DHP09}. As another application, we sketch the determination of the renormalization group flow, at first order in $\hbar$, via a kind of background field method, solely on the basis of the scaling behavior of the parametrix, i.e., without calculating any loop integral.

The article is structured as follows: In the next section, we introduce the categorical setup, which now also includes a principal $G$-bundle and a background gauge connection and Yukawa field. We also introduce the classical algebra of functionals. In Section~\ref{sec:Quantization}, the quantization of the algebra of functionals, via deformation quantization, is described. Also the construction of covariant Wick powers and time-ordered products is performed. The applications to current and stress-energy conservation and the renormalization group flow are described in Section~\ref{sec:Currents}. In \ref{app:Spin}, we recall some basic notions of spin geometry and in \ref{app:Deformation}, we provide a proof of a proposition on deformations of spacetimes and associated structures.

\subsection{Notation and Conventions}

We are working on $n$-dimensional spacetimes with signature $(-,+, \dots, +)$. For morphisms and equivalences of principal bundles, we use the following definition:
\begin{definition}
A \emph{morphism} $\eta$ between two principal $G$ bundles $P, P'$ over manifolds $M$ and $M'$ is a smooth map $\eta : P \to P'$ which is $G$-equivariant, i.e., $\eta(p g) = \eta(p) g$ for all $p \in P$, $g \in G$, and covers a smooth map $\chi: M \to M'$, i.e., $\pi' \circ \eta = \chi \circ \pi$. $P$ and $P'$ are \emph{equivalent}, $P \simeq P'$, if $\eta$ and $\chi$ are diffeomorphisms.
\end{definition}

The Cartesian product of bundles $E, F$ is denoted by $E \boxtimes F$, which is a bundle over the Cartesian product of the base spaces. Smooth sections of a bundle $E$ with base space $M$ are denoted by $\Gamma^\infty(M, E)$, and a subscript $c$ denotes compactly supported sections. $\dot T^* M$ denotes the cotangent bundle of $M$, with the zero removed. For a manifold $M$, $\diag^k \subset M^k$ denotes the total diagonal,
\[
 \diag^k = \{(x, \dots, x) \in M^k\}.
\]

For a half-integer $k$, $[k]$ denotes the integer part. 
The symbol $\doteq$ denotes a definition of the left hand side by the right hand side. Typically, primed symbols, such as $v'$, stand for elements of a dual space (an exception is a primed coordinate $x'$).

\section{The categorical description}

Before introducing the coupling to background fields, let us first review the structure introduced in \cite{SandersDirac}. The identity component of the $\Spin$ group is denoted by $\Spin_0$, \cf \ref{app:Spin} for a definition. A \emph{spin structure} $SM$ over an oriented, time-oriented spacetime $M$ is a principal $\Spin_0$ bundle over $M$ with a projection $\pi_S: SM \to FM$ to the bundle of oriented, time-oriented, orthonormal frames, which preserves the base point and intertwines the action of $\Spin_0$, i.e.,
\[
 \pi_S \circ S = \lambda(S) \circ \pi_S,
\]
where $S \in \Spin_0$ and $\lambda$ is the covering map to the connected component $\Lor_0$ of the Lorentz group. One defines the following category:
\begin{description}
\item[$\CatSpMan$:] The objects are spin structures $SM$ whose base spaces $M$ are oriented, time-oriented, globally hyperbolic spacetimes. A morphism $\chi: SM \to SM'$ is a principal $\Spin_0$ bundle morphism, covering an orientation, time-orientation and causality preserving isometric embedding $\psi: M \to M'$ such that $\pi'_S \circ \chi = \psi_* \circ \pi_S$.
\end{description}

In order to be able to functorially associate vector spaces and algebras to such spin structures, we also introduce the following categories.

\begin{description}
\item[$\CatVec_{(i)}$:] The objects are locally convex vector spaces. The morphisms are continuous linear (injective) maps.
\item[$\CatAlg$:] The objects are topological $*$-algebras. The morphisms are continuous injective $*$-algebra homomorphisms.
\end{description}

As discussed in \ref{app:Spin}, there is a standard (spinor) representation $\rho_0$ of $\Spin_0$ on $\C^{2^{[n/2]}}$.
The associated vector bundle $DM$ induced by this representation is called the \emph{standard} Dirac bundle in \cite{SandersDirac}. Its dual bundle is denoted by $D^*M$.
We note that there are anti-linear conjugations\footnote{Of course there is also a charge conjugation. As our aim is to study arbitrary background gauge fields, where charge conjugation is not a symmetry, we do not discuss it here.}
\begin{align*}
 ^+ : DM & \to D^*M, & ^+: D^*M & \to DM,
\end{align*}
fulfilling the usual properties, defined through
\begin{align*}
 [p, z]^+ & \doteq [p, z^+], & [p, z']^+ & \doteq [p, z'^+],
\end{align*}
where $p \in P$ and
\begin{align*}
 z^+ & \doteq - i z^* \gamma^0 , & z'^+ & \doteq - i \gamma^0 z'^*,
\end{align*}
for $z \in \C^{2^{[n/2]}}, z' \in {\C^{2^{[n/2]}}}^*$, \cf \ref{app:Spin} and \cite{SandersDirac, HackThesis} for details on the case $n=4$.

We now want the Dirac field to be charged under a compact Lie group $G$ in a representation $\rho$. Hence, we consider a principal $G$ bundle $P$ over $M$, and consider the direct product bundle\footnote{We refer to \cite[p.~82]{KobayashiNomizu} for a definition.} $SM + P$.\footnote{Given the fact that we also might want to consider scalar fields charged under $G$, it seems reasonable to consider only direct products of $SM$ and $P$ and not general principal $\Spin_0 \times G$ bundles.} On $P$, we consider a \emph{connection}, i.e., a $\g$ valued 1-form $A$ on $P$, which is equivariant and fulfills $A(v^\#) = v$, where $v^\#$ is the fundamental vector field corresponding to $v \in \g$, \cf \cite[Chapter II]{KobayashiNomizu}. We recall that the Levi-Civita connection induces a unique \emph{spin connection} $\Omega$ on $SM$, \cf \cite[Section~II.4]{LawsonMichelsohn} for details. By \cite[Prop.~II.6.3]{KobayashiNomizu}, there is then a unique connection on $SM + P$ such that the pushforward under the projection homomorphisms coincide with $\Omega$ and $A$. We also want to allow for couplings to a nonconstant Yukawa background field $m \in C^\infty(M, \R)$. This leads us to consider the following category:
\begin{description}
\item[$\CatGSpMan$:] The objects are quadruples $(SM, P, A, m)$, where $SM$ is a spin structure over an oriented, time-oriented globally hyperbolic spacetime $M$, $P$ a principal $G$ bundle over $M$, $A$ a connection on $P$, and $m \in C^\infty(M, \R)$. A morphism $\chi: (SM, P, A, m) \to (SM', P', A', m')$ is given by $(\chi_{SM}, \chi_P)$, where $\chi_{SM (P)}$ is a principal $\Spin_0$ ($G$) bundle morphism. $\chi_{SM}$ and $\chi_G$ cover the same orientation, time-orientation and causality preserving isometric embedding $\psi: M \to M'$ with $m = \psi^* m'$. Furthermore,
%$\pi' \circ \chi = \psi_* \circ \pi$ and
$A = \chi_P^* A'$.
\end{description}

We note that a pair $(\chi_{SM}, \chi_P)$ as above induces a principal $\Spin_0 \times G$ bundle morphism $\chi: SM + P \to SM' + P'$ by $\chi (p, q) = (\chi_{SM} (p), \chi_P (q))$, where $p \in SM|_x$, $q \in P|_x$ for some $x \in M$. We also remark that taking $SM' = SM$, $P = P'$, $\chi_{SM} = \id$, and, in a local trivialization,
\begin{equation}
\label{eq:GaugeTrafo}
 \chi_P:(x,g) \mapsto (x, h(x) g)
\end{equation}
for some $h \in C^\infty(M, G)$ corresponds to a gauge transformation. Hence, gauge equivalence is built into the categorical framework.

\begin{remark}
The incorporation of background fields other than the gravitational one into the framework of locally covariant field theory can be found in earlier works, for example \cite{BaerGinoux} (implicitly through the specification of a Green-hyperbolic operator) or \cite{HollandsWaldWick} (though not formalized in the language of category theory). A unified treatment of gauge and general covariance can be found in \cite{HollandsYM} (again not in the language of category theory). But, as explained below in Remark~\ref{rem:GaugeCovariance}, our approach has a different notion of local covariance. 
\end{remark}

Given a representation $\rho$ of $G$ on a finite dimensional $\C$ vector space $V$, we construct the vector bundle $D_\rho M$ associated to $SM + P$ via the representation $\rho_0 \otimes \rho$ on $\C^{2^{[n/2]}} \otimes V$. The corresponding dual bundle is denoted by $D_\rho^* M$, and the double spinor bundle by $D_\rho^\oplus M \doteq D_\rho M \oplus D_\rho^* M$. We define the vector spaces
\begin{align*}
 \E^{(*)}(SM, P) & \doteq \Gamma^\infty(M, D_\rho^{(*)} M), \\
 \E^\oplus(SM, P) & \doteq \Gamma^\infty(M, D_\rho^\oplus M).
\end{align*}
The assignments $(SM, P, A, m) \mapsto \E^{(*)}(SM, P), \E^\oplus(SM, P)$ are contravariant functors from $\CatGSpMan$ to $\CatVec$. Under $\E$, the morphism $\chi$ is mapped to the pullback $\xi^*$ of $\xi: D_\rho M \to D_\rho M'$, defined by $\xi([p, z]) = [\chi(p), z]$, $p \in SM + P$, $z \in \C^{2^{[n/2]}} \otimes V$, and analogously for $\E^*$, $\E^\oplus$.
Note that the pull-back $\xi^*$ is well-defined, as $\xi$ reduces to an isomorphism of fibers.
We also define the test section spaces
\begin{align*}
 \D^{(*)}(SM, P) & \doteq \Gamma^\infty_c(M, D_\rho^{(*)} M), \\
 \D^\oplus(SM, P) & \doteq \Gamma^\infty_c(M, D_\rho^\oplus M).
\end{align*}
These are covariant functors from $\CatGSpMan$ to $\CatVec_i$. A morphism $\chi$ is mapped to the push-forward $\xi_*$, where $\xi$ is defined as above and $\xi_*$ is extended from $\chi(M)$ to $M'$ by the zero section. For later convenience, we also introduce
\begin{equation}
\label{eq:Tens}
 \T_c(SM, P) \doteq \Gamma^\infty_c(M, \wedge( D_\rho^\oplus M \otimes T^\oplus M)),
\end{equation}
where\footnote{Here and in the following, $\bigoplus$ denotes a finite direct sum, i.e., only a finite number of components is nonzero.}
\[
 T^\oplus M \doteq \bigoplus_{k} \Sym^k TM,
\]
$\wedge$ denotes the exterior tensor product, and $\Sym^k$ the $k$th symmetric tensor product. This is a also a covariant functor from $\CatGSpMan$ to $\CatVec_i$. Sometimes we need to be more specific, then $\T_c^{j A}$ denotes the subspace where the $j$th exterior power is taken, and $A \in \N_0^j$ counts the tensor power corresponding to $T^\oplus M$ in each of the factors. For example, $\T_c^{1 0} = \D^\oplus$.

As $V$ is finite dimensional, $V \simeq \C^N$, there is an inner product on $V$. By averaging over $G$, we obtain a sesquilinear form $\skal{\cdot}{\cdot}_V$ on $V$ that is conserved under the action $\rho$. There is then a natural anti-linear map $^+$ from $V$ to $V^*$, given by
\[
 v^+(w) \doteq \skal{v}{w}_V.
\]
Analogously, we may define $^+: V^* \to V$. Thus, we may define the conjugation map $^+: D_\rho M \to D_\rho^* M$ by
\[
 [p, z \otimes v]^+ \doteq [p, z^+ \otimes v^+], \quad p \in SM + P, z \in \C^{2^{[n/2]}}, v \in V,
\]
and analogously for $^+: D_\rho^{*} M \to D_\rho M$. This lifts to anti-linear maps $\E(SM, P) \to \E^*(SM, P)$, $\E^*(SM, P) \to \E(SM, P)$, and hence to an anti-linear map $^+: \E^\oplus(SM, P) \to \E^\oplus(SM, P)$.
The pointwise pairing $D_\rho^* M|_x \times D_\rho M|_x \to \C$ defined by
\begin{multline*}
 \skal{[p, z' \otimes v']}{[p,z \otimes v]} \doteq z'(z) v'(v), \\ p \in SM + P, z \in \C^{2^{[n/2]}}, v \in V, z' \in {\C^{2^{[n/2]}}}^*, v' \in V^*,
\end{multline*}
leads to a pairing $\E^*(SM, P) \times \E(SM, P) \to C^\infty(M)$, and to a pairing $\E^\oplus(SM, P) \times \E^\oplus(SM, P) \to C^\infty(M)$ defined
by\footnote{Note that we are using a different convention than in \cite{SandersDirac} and \cite{FewsterVerchDirac}, in that we are contracting the spinor with the cospinor and vice versa.}
\begin{equation}
\label{eq:Pairing}
 \skal{(f, f')}{(g, g')} \doteq \skal{g'}{f} + \skal{f'}{g}, \quad f, g \in \E(SM, P), f', g' \in \E^*(SM, P).
\end{equation}

\subsection{The Dirac operator and its fundamental solutions}
\label{sec:DiracOperator}

The connection on $SM + P$ induces the \emph{exterior covariant derivative} $\ud_A$ on $D_\rho M$, \cf \cite[Sec.~II.5]{KobayashiNomizu}. This determines a covariant derivative $\nabla$ on sections of $D_\rho M$, \cite[Sec.~III.1]{KobayashiNomizu}. Analogously, there is a covariant derivative $\nabla^*$ on sections of $D_\rho^* M$. We may then define the Dirac operators $D$ and $D^*$, which, in a local trivialization\footnote{Here we use a trivialization of tensor product form, i.e., the trivialization of $D_\rho M$ is induced from trivializations of $SM \times_{\rho_0} \C^{2^{[n/2]}}$ and $P \times_\rho V$.} are given by\footnote{Here we use the customary notation $\gamma^\mu = \gamma(\ud x^\mu)$ for the Clifford multiplication composed with the spinor representation.}
\begin{align}
\label{eq:D}
 D & = - \gamma^\mu (\del_\mu + \Omega_\mu - i A_\mu) + m = - \gamma^\mu \nabla_\mu + m, \\
 D^* & = \gamma^\mu (\del_\mu + \Omega_\mu^* + i A_\mu^*) + m = \gamma^\mu \nabla^*_\mu + m, \nonumber
\end{align}
where $\Omega_\mu$ is the spin connection coefficient, $m$ is the smooth function in the objects of $\CatGSpMan$, and $A_\mu$ is determined from the connection $A$ in the objects of $\CatGSpMan$ by pull-back \wrt the local section defining the trivialization. The $*$ on $\Omega_\mu$ and $A_\mu$ denotes the action on the dual bundle, defined by duality.
These operators intertwine the action of $\E^{(*)}(\chi)$ and $\D^{(*)}(\chi)$ for a morphism $\chi$ of $\CatGSpMan$, i.e., they are natural transformations $\E^{(*)} \to \E^{(*)}$, $\D^{(*)} \to \D^{(*)}$.

Let us briefly review the construction of retarded and advanced propagators and fundamental solutions for $D$ and $D^*$. The square of $D$ is a normally hyperbolic operator \cite{BGP07},
\begin{equation}
\label{eq:P}
 P = D D = g^{\mu \nu} \nabla_\mu \nabla_\nu - 2 m \gamma^\mu \nabla_\mu + \tfrac{1}{4} [\gamma^\mu, \gamma^\nu] ( \mathfrak{R}_{\mu \nu} - i F_{\mu \nu} ) - \gamma^\mu \del_\mu m + m^2,
\end{equation}
where
\[
 [\nabla_\mu,  \nabla_\nu] = \mathfrak{R}_{\mu \nu} - i F_{\mu \nu},
\]
with $F$ the curvature of the connection $A$ in the representation $\rho$ and $\mathfrak{R}$ the spin curvature \cite[Section~II.4]{LawsonMichelsohn}.
To the normally hyperbolic operator $P$ correspond unique retarded and advanced propagators \cite{BGP07}
\[
 \Delta_{\ret / \adv} : \D(SM, P) \to \E(SM, P).
\]
The corresponding propagators for $D$ are then defined as
\begin{equation}
\label{eq:DefS_ra}
 S_{\ret / \adv} \doteq D \circ \Delta_{\ret / \adv}.
\end{equation}
For $D^*$, one proceeds in complete analogy, arriving at propagators $S_{\ret / \adv}^*$.
By construction, one then has $D^{(*)} \circ S_{\ret / \adv}^{(*)} = \id$. A theorem by Dimock \cite{DimockDirac} (see also \cite{Muehlhoff}), which is straightforwardly generalized to fields charged under a gauge group, implies that then also $S_{\ret / \adv}^{(*)} \circ D^{(*)} = \id$ on $\D^{(*)}(SM, P)$. Hence, $S_{\ret / \adv}^{(*)}$ are the unique retarded/advanced propagators for $D^{(*)}$, and the causal propagator is given by $S^{(*)} = S_\ret^{(*)} - S_\adv^{(*)}$. For the double spinor notation, we define\footnote{This is the Dirac operator obtained from the variation of the Dirac action, \cf \cite{RejznerFermions}. In other works \cite{DHP09, SandersDirac}, the double spinor Dirac operator is defined as $D \oplus D^*$.}
\begin{align*}
 D^\oplus & \doteq D \oplus - D^*, & S^\oplus & \doteq S \oplus - S^*.
\end{align*}
As the $S_{\ret / \adv}^{(*)}$ are unique and $D^{(*)}$ is a natural transformation,
%intertwines $\E^{(*)}$ and $\D^{(*)}$,
we have, for a morphism $\chi: (SM, P, A, m) \to (SM', P', A', m')$,
\[
 S^{(*)} = \E^{(*)}(\chi) \circ {S'}^{(*)} \circ \D^{(*)}(\chi).
\]
We also note that $S^{(*)}$ fulfills
\begin{equation}
\label{eq:S}
 \int \skal{f'}{S f}(x) \ud_g x = - \int \skal{S^* f'}{f}(x) \ud_g x = - \int \overline{\skal{f^+}{S f'^+}(x)} \ud_g x,
\end{equation}
where $f \in \D(SM, P)$, $f' \in \D^*(SM, P)$, and $\ud_g x$ is the canonical volume form. The first equality can be shown as in the proof of Theorem~2.1 in \cite{DimockDirac}. The second equality follows from $(D f)^+ = D^* f^+$ and the uniqueness of the retarded/advanced propagators. Finally, we remark that $S^\oplus$ may also be seen as a distribution, $S^\oplus \in \Gamma^\infty_c(M^2, D_\rho^\oplus M \boxtimes D_\rho^\oplus M)'$, by
\[
 S^\oplus(u,v) \doteq \int \skal{u}{S^\oplus v}(x) \ud_g x,
\]
where we used the pairing \eqref{eq:Pairing}. Analogously, $S \in \Gamma^\infty_c(M^2, D_\rho^* M \boxtimes D_\rho M)'$, and
\[
 S^\oplus((f,f'),(g,g')) = S(f', g) + S(g', f),
\]
where we used \eqref{eq:S}.

\subsection{Functionals}

In the framework of pAQFT, one considers the algebra of functionals on the configuration space and deforms it (quantization). For fermionic fields, it was proposed in \cite{RejznerFermions} to consider functionals on the space of antisymmetrized configurations, i.e., in the present setting, on
\[
 \wedge \E^\oplus(SM, P) \doteq  \bigoplus_{k = 0}^\infty \wedge^k \E^\oplus(SM, P),
\]
with
\[
 \wedge^k \E^\oplus(SM, P) \doteq \{ B \in \Gamma^\infty(M^k, (D_\rho^\oplus M)^k) | B \text{ antisymmetric} \}.
\]
This space is equipped with its natural topology (uniform convergence of all derivatives on compact subsets). For an element $B \in \wedge \E^\oplus(SM, P)$, we denote by $B_k$ its component in $\wedge^k \E^\oplus(SM, P)$. 

We now consider functionals on $\wedge \E^\oplus(SM, P)$, i.e., linear maps from this space into the complex numbers. We denote by $F_k$ the restriction of a functional $F$ to $\wedge^k \E^\oplus(SM, P)$. Then we define the grade by $\betrag{F_k} = k$. The \emph{regular} functionals, $\F_\reg(SM, P)$, are those of the form
\begin{equation}
\label{eq:F_reg}
 F_k(B) = \int \skal{f_k}{B_k}(x_1, \dots, x_k) \ud_g x_1 \dots \ud_g x_k,
\end{equation}
with $f_k \in \Gamma^\infty_c(M^k, D_\rho^\oplus M^k)$, $f_k$ antisymmetric. We call $f_k$ the \emph{kernel} of $F_k$. Here we used the obvious generalization of the pairing \eqref{eq:Pairing}. We can introduce an antisymmetric product $\wedge$ on $\F_\reg(SM, P)$, by defining the kernel of the product $H = F \wedge G$ as
\begin{multline*}
 h_k(x_1, \dots, x_k) \\ \doteq \sum_{l=0}^k \frac{1}{l!(k-l)!}  \sum_{\pi \in S_k} (-1)^{\betrag{\pi}} f_l(x_{\pi(1)}, \dots, x_{\pi(l)}) g_{k-l}(x_{\pi(l+1)}, \dots, x_{\pi(k)}).
\end{multline*}
An involution on $\F_\reg(SM, P)$ is defined as
\[
 F^*(B) \doteq \overline{F(B^+)},
\]
where on elements of $\wedge \E^\oplus(SM, P)$, conjugation is defined by
\[
 (u_1 \wedge \dots \wedge u_k)^+ = u_k^+ \wedge \dots \wedge u_1^+.
\]
Finally, we equip $\F_\reg(SM, P)$ with the topology induced from the standard locally convex topology on $\Gamma^\infty_c(M^k, D_\rho^\oplus M^k)$ (uniform convergence of all derivatives on compact sets), the space of the kernels.
The assignment $(SM, P, A, m) \mapsto \F_\reg(SM, P)$ is then a covariant functor from $\CatGSpMan$ to $\CatAlg$.

The regular functionals do not allow for the description of local interactions or nonlinear observables, such as the stress--energy tensor. In order to cure this, one allows for more general kernels $f_k$, namely compactly supported distributions fulfilling the wave front set condition
\begin{equation*}
 \WF(f_k) \cap (\bar V_+^k \cup \bar V_-^k) = \emptyset,
\end{equation*}
where $\bar V_{\pm}$ is the closure of the dual of the forward/backward light cone. These are called the \emph{microcausal} functionals. They also form an algebra $\F(SM, P)$. It can be equipped with a topology such that it is a nuclear, locally convex vector space \cite{BDF09, Rej11}. $\F$ is then also a covariant functor from $\CatGSpMan$ to $\CatAlg$.

By reference to the support of the kernels $f_k$, one defines the \emph{support} of a functional as
\begin{equation}
\label{eq:suppF}
 \supp F = \left\{ x \in M \mid (x, x_2, \dots, x_k) \in \supp f_k \text{ for some } x_i \right\}. 
\end{equation}
Here we assumed without loss of generality that $f_k$ is antisymmetric.
The subspace $\F_\loc(SM, P)$ of $\F(SM, P)$ in which the $f_k$'s are localized on the total diagonal $\diag^k$
and their wave front sets orthogonal to $T \diag^k$,
\begin{equation*}
 \WF(f_k) \perp T \diag^k,
\end{equation*}
is the space of \emph{local} functionals. It is a covariant functor from $\CatGSpMan$ to $\CatVec_i$.

We denote by $\F_0(SM, P)$ the ideal of functionals that vanish on on-shell configurations, i.e., on configurations fulfilling $D^\oplus B = 0$, where $D^\oplus$ acts on an arbitrary coordinate. We define the on-shell functionals as $\F_S(SM, P) \doteq \F(SM, P) / \F_0(SM, P)$. This amounts to identifying two functionals if they agree on all on-shell configurations. Due to the functoriality of the Dirac operator, this is also a covariant functor from $\CatGSpMan$ to $\CatAlg$.

\section{Quantization}
\label{sec:Quantization}

To prepare grounds for the deformation of the graded commutative algebra $\F$ in the spirit of deformation quantization \cite{DuetschFredenhagenDeformation}, we first have to equip it with a Poisson structure by defining the Peierls bracket.  To this avail, we introduce functional derivatives \cite{RejznerFermions}
\[
 F^{(1)}(B)(u) \doteq F(u \wedge B), \quad B \in \wedge \E^\oplus(SM, P), u \in \E^\oplus(SM, P).
\]
Hence, $F^{(1)}(B)$ can be interpreted as a compactly supported distributional section of $D_\rho^\oplus M$. We denote its integral kernel by $F^{(1)}(B)(x)$. For $F \in \F_\reg$, this is even a smooth section. Higher order derivatives are defined by composition of derivatives, i.e.,
\[
 F^{(k)}(B)(u_1, \dots, u_k) = F(u_1 \wedge \dots \wedge u_k \wedge B).
\]

Given the fundamental solution $S^\oplus$, the Peierls bracket of two observables $F, G \in \F_\reg$, with $F$ being homogeneous, is defined as
\[
 \Pei{F}{G} \doteq (-1)^{\betrag{F}+1} \int F^{(1)}(x) \wedge G^{(1)}(y) S^\oplus(x,y) \ud_g x \ud_g y.
\]
Note that here and in the following, the contraction of $F^{(1)}$ and $G^{(1)}$ with $S^\oplus$ has to be understood as in the pairing defined in \eqref{eq:Pairing}.

In deformation quantization, one aims at finding a product $\star$ on the observables, fulfilling
\begin{align}
\label{eq:deformationQuantization}
 F \star G & = F \wedge G + \order(\hbar), & F \star G - (-1)^{\betrag{F} \betrag{G}} G \star F & = i \hbar \Pei{F}{G} + \order(\hbar^2),
\end{align}
in the sense of formal power series in $\hbar$. This is straightforward for the regular functionals \cite{RejznerFermions}. We define the operator $\Gamma^\otimes_{\frac{i}{2} S}$ by
\[
 \Gamma^\otimes_{\frac{i}{2}S} (F \otimes G) \doteq (-1)^{\betrag{F}+1} \frac{i}{2} \int F^{(1)}(x) \otimes G^{(1)}(y) S^\oplus(x,y) \ud_g x \ud_g y,
\]
and the $\star$ product as
\[
 F \star G \doteq \wedge \exp(\hbar \Gamma^\otimes_{\frac{i}{2} S}) F \otimes G.
\]
Here the wedge denotes the wedge product, $\wedge (F \otimes G) \doteq F \wedge G$. It is clear that \eqref{eq:deformationQuantization} is fulfilled.

As $S^\oplus$ is a bi-solution, $\star$ is also well-defined on the regular on-shell functionals. As the fundamental solution is a local and covariant object, the assignment $(SM,P) \mapsto (\F_\reg(M)[[\hbar]], \star)$ is a covariant functor from $\CatGSpMan$ to $\CatAlg$.

The extension to microcausal functionals proceeds via Hadamard two-point functions. These are defined as follows:
\begin{definition}
A \emph{Hadamard two-point function} is a distributional section $\omega \in \Gamma^\infty_c(M^2, D_\rho^\oplus M^2)'$ fulfilling
\begin{align}
\label{eq:HadamardWaveEq}
 \omega(D^\oplus u, v) & = 0, \\
\label{eq:HadamardAnticommutator}
 \omega(u,v) + \omega(v,u) & = i S^\oplus(u,v), \\
\label{eq:HadamardConjugation}
 \overline{\omega(u, v)} & = \omega(v^+,u^+), \\
\label{eq:HadamardWF}
 \WF(\omega) & \subset C_+,
\end{align}
where $u, v \in \Gamma^\infty_c(M, D_\rho^\oplus M)$ and
\[
 C_\pm = \{ (x_1, x_2; k_1, - k_2) \in T^* M^2 \setminus \{ 0 \} | (x_1; k_1) \sim (x_2; k_2), k_1 \in \bar V^\pm_{x_1} \}.
\]
Here $(x_1; k_1) \sim (x_2; k_2)$ if there is a lightlike geodesic joining $x_1$ and $x_2$ to which $k_1$ and $k_2$ are co-parallel, and $k_2$ coincides with the parallel transport of $k_1$ along this curve. For $x_1 = x_2$, $k_1, k_2$ are lightlike and coinciding.
\end{definition}

Assume for the moment that such distributions exist for all $(SM, P)$ (this is shown later). Denote by $\omega_a(u, u') = \frac{1}{2} ( \omega(u,u') - \omega(u', u))$ the antisymmetric part and define a product $\star_\omega$, equivalent to $\star$,
\begin{equation}
\label{eq:scalarStarOmega}
 F \star_\omega G \doteq \alpha_{\omega_a} \left( \alpha_{\omega_a}^{-1} F \star \alpha_{\omega_a}^{-1} G \right),
\end{equation}
by the equivalence map
\[
  \alpha_{\omega_a} \doteq \exp(\hbar \Gamma_{\omega_a}),
\]
with
\[
  \Gamma_{\omega_a} F \doteq \int \ud_gx \ud_gy \ \omega_a(x,y) F^{(2)}(x,y).
\]
By \eqref{eq:HadamardAnticommutator}, the $\star_\omega$ product amounts to replacing $\frac{i}{2} S^\oplus$ by $\omega$ in the definition of $\star$. The condition \eqref{eq:HadamardConjugation} ensures that $\star_\omega$ is compatible with the conjugation.
From \eqref{eq:HadamardWaveEq} it follows that also $\star_\omega$ is well-defined on on-shell functionals. Furthermore, due to condition \eqref{eq:HadamardWF}, $\star_\omega$ can be extended to the microcausal functionals $\F(SM, P)$, \cf \cite{BDF09} for the scalar case. To achieve a fully covariant construction, it is convenient to consider all possible $\omega$'s at the same time. Hence, we define $\Had(SM, P)$ to be the set of all Hadamard two-point functions. One then defines $\A(SM, P)$ as the space of families
\begin{align*}
 F & = \{ F_\omega \}_{\omega \in \Had(SM, P)}, & F_\omega \in \F(SM, P)[[\hbar]]
\end{align*}
fulfilling
\begin{equation}
\label{eq:F_omega'}
 F_{\omega'} = \exp(\hbar \Gamma_{\omega_a'-\omega_a}) F_\omega.
\end{equation}
In particular, an element $F$ of $\A(SM, P)$ is entirely specified by $F_\omega$ for a single $\omega \in \Had(SM, P)$.
We can then equip $\A(SM, P)$ with the product
\[
 (F \star G)_\omega = F_\omega \star_\omega G_\omega.
\]
Note that the assignment $M \mapsto (\A(SM, P), \star)$ is a covariant functor from $\CatGSpMan$ to $\CatAlg$, which maps a morphism $\chi: (SM, P) \to (SM', P')$ to the morphism $\chi_*$ defined by
\begin{equation}
\label{eq:A_morphism}
 (\chi_* F)_{\omega'} = \chi_*(F_{\chi^* \omega'}),
\end{equation}
where on the \rhs $\chi_*$ is the morphism of $\F[[\hbar]]$. Furthermore, we define the algebra $\A_S(SM, P)$ of on-shell functionals analogously to $\F_S(SM, P)$. The local elements $\A_\loc(SM, P)$ of $\A(SM, P)$ are defined as those for which $F_\omega \in \F_\loc(SM, P)[[\hbar]]$ for one (and hence all) $\omega$. Again, $\A_\loc$ is a covariant functor from $\CatGSpMan$ to $\CatVec_i$.\footnote{In \cite{DHP09}, the Hadamard parametrix is used instead of Hadamard two-point functions. However, as discussed below, the parametrix is in general only defined in a neighborhood of the diagonal. Hence, the construction proposed in \cite{DHP09} does not work in general, i.e., one does not obtain a covariant functor to $\CatAlg$.}

It remains to show that Hadamard two-point functions exist. To this avail, we use the deformation argument of \cite{FNW81}. First of all, we have the following proposition, whose proof can be found in \ref{app:Deformation}.

\begin{proposition}
\label{prop:Deformation}
Let $M$ be globally hyperbolic, $(SM, \pi_S)$ a spin structure over $M$, $P$ a principal $G$ bundle over $M$ with connection $A$, $m \in C^\infty(M, \R)$, and $\Sigma$ a smooth Cauchy surface of $M$. There exist $M'$, $\tilde M$ globally hyperbolic and diffeomorphic to $M$ with spin structures $(SM', \pi'_S)$ and $(S\tilde M, \tilde \pi_S)$, $P'$, $\tilde P$ principal $G$ bundles over $M'$, $\tilde M$ with $P \simeq P' \simeq \tilde P$, connections $A'$, $\tilde A$, $m' \in C^\infty(M', \R)$, $\tilde m \in \R$, and smooth Cauchy surfaces $\Sigma', \tilde \Sigma' \subset M'$, $\tilde \Sigma \subset \tilde M$, $\Sigma' \cap \tilde \Sigma' = \emptyset$ such that

\begin{enumerate}
 \item $\Sigma$ and $\Sigma'$ are isometric and there are neighborhoods $U$, $U'$ of $\Sigma$, $\Sigma'$ such that $U$ and $U'$ are isometric and $m = i^* m'$ for this isometry. If $i_P$ is the bundle isomorphism $i_P: P \to P'$ we have $A|_{\pi_P^{-1}(U)} = i_P^* A'|_{\pi_{P'}^{-1}(U')}$. %Furthermore, $(SM|_U, \pi_S) \simeq (SM'|_{U'}, \pi'_S)$.
If $i_{SM}$ is the isomorphism $i_{SM}: SM \to SM'$, then $\pi_S' \circ i_{SM}|_{\pi_{SM}^{-1}(U)} = i_* \circ \pi_S|_{\pi_{SM}^{-1}(U)}$.

 \item $\tilde M$ is ultrastatic, i.e., $\tilde M = \R \times \tilde \Sigma$ with metric $\tilde g = - \ud t^2 \otimes \tilde h$, where $\tilde h$ is a Riemannian metric on $\tilde \Sigma$. The connection $\tilde A$ is time-invariant and has no time-component, i.e., $\Lie_{\del_t^*} \tilde A = 0$,
%\begin{align*}
% \Lie_{\del_t^*} \tilde A & = 0, &
% \tilde A(\del_t^*) & = 0,
%\end{align*}
where $\del_t^*$ is the horizontal lift \cite[Section~II.1]{KobayashiNomizu} of $\del_t$ \wrt $\tilde A$.

% \item $\tilde M$ is ultrastatic, i.e., $\tilde M = \R \times \tilde \Sigma$ with metric $g = - \ud t^2 \otimes h$, where $h$ is a Riemannian metric on $\tilde \Sigma$. Furthermore, $\tilde m(t,x) = m(x)$ in these coordinates. If $\{ \tilde U_i \}_{i \in I}$ is a covering of $\tilde \Sigma$ by charts, then on each $U_i = \R \times \tilde U_i$ there is a section $s_i$ of $P$ such that the pull-back $A_i$ of $A$ along $s_i$ is of the form
%\begin{align*}
% A_{i,0} & = 0, & A_{i,a}(t, x) & = A_{i,a}(x),
%\end{align*}
%in the natural coordinates on $U_i$. Analogously, on each $U_i$ there is a section $\tilde s_i$ of $S \tilde M$ such that the time component of the spin connection one-form pulled back by $\tilde s_i$ vanishes and the spatial ones are time-independent.

 \item $\tilde \Sigma$ and $\tilde \Sigma'$ are isometric and there are neighborhoods $\tilde U$, $\tilde U'$ of $\tilde \Sigma$, $\tilde \Sigma'$ such that $\tilde U$ and $\tilde U'$ are isometric and $\tilde m = \tilde \imath^* m'$ for this isometry. If $\tilde \imath_P$ is the bundle isomorphism $\tilde \imath_P: \tilde P \to P'$ we have $\tilde A|_{\pi_{\tilde P}^{-1} (\tilde U)} = \tilde \imath_P^* A'|_{\pi_{P'}^{-1} (\tilde U')}$.
%Furthermore, $(S \tilde M|_{\tilde U}, \tilde \pi) \simeq (S M'|_{\tilde U'},  \pi')$.
If $\tilde \imath_{SM}$ is the isomorphism $\tilde \imath_{SM}: S\tilde M \to SM'$, then $\pi_S' \circ \tilde \imath_{SM}|_{\pi_{S \tilde M}^{-1}(\tilde U)} = \tilde \imath_* \circ \tilde \pi_S|_{\pi_{S \tilde M}^{-1}(\tilde U)}$.

\end{enumerate}
\end{proposition}

On the ultrastatic spacetime $\tilde M$ and in the slicing $\tilde M \simeq \R \times \tilde \Sigma$, the Dirac equation may now be written as
\[
 - i \del_t \psi + K \psi = 0,
\]
where $K$ is, in a local trivialization, given by
\[
 K \psi \doteq i \gamma^0 \gamma^a (\del_a + \tilde \Omega_a - i \tilde A_a) \psi - i \gamma^0 m \psi,
\]
%where $\tilde \Omega$ is the spin connection and
with $1 \leq a \leq n-1$.
From \cite[Thm.~2.54]{BeemEhrlich} we conclude that $\tilde \Sigma$ is complete.
Then it follows from \cite[Thm.~2.2]{Chernoff}, that $K$ is an essentially self-adjoint operator on $L^2(\tilde \Sigma, D_\rho \tilde M|_{\tilde \Sigma})$ with domain $\Gamma^\infty_c(\tilde \Sigma, D_\rho \tilde M|_{\tilde \Sigma})$, where the scalar product is defined through the fiber-wise pairing
\[
 \skal{z_1 \otimes v_1}{z_2 \otimes v_2} = \bar{z_1} \cdot z_2 \skal{v_1}{v_2}_V.
\]
on $D_\rho \tilde M|_x \simeq \C^{2^{[n/2]}} \otimes V$.
Note that here we are not using spinor conjugation.\footnote{This scalar product stems from the standard inner product $(f, g) = - i \int_\Sigma \skal{f^+}{\gamma^\mu g} n_\mu$ for $f, g \in L^2(\Sigma, D_\rho M)$.} In the following we denote the self-adjoint extension of $K$ also by $K$. We can now proceed as in \cite{WrochnaDirac} to obtain distributional sections $\omega^\pm \in \Gamma^\infty_c(\tilde M^2, D_\rho^* \tilde M \boxtimes D_\rho \tilde M)'$. These are bisolutions fulfilling
\begin{align}
\label{eq:omegaWF}
 \WF \omega^\pm & \subset C_\pm, \\
\label{eq:omegaAnticommutator}
 \omega^+ + \omega^- & = i \tilde S,
\end{align}
where $\tilde S$ is the causal propagator on $(S \tilde M, \tilde P)$.
We can then define the distributional section $\omega \in \Gamma_c^\infty(\tilde M^2, D_\rho^\oplus \tilde M \boxtimes D_\rho^\oplus \tilde M)'$ by
\begin{align*}
 \omega(f', f) & \doteq \tfrac{1}{2} \left( \omega^+(f',f) + \overline{\omega^+(f^+, f'^+)} \right), \\
 \omega(f, f') & \doteq \tfrac{1}{2} \left( \omega^-(f',f) + \overline{\omega^-(f^+, f'^+)} \right), \\
 \omega(f, g) & \doteq 0, \\
 \omega(f', g') & \doteq 0,
\end{align*}
where $f, g \in \D(S \tilde M, \tilde P)$, $f', g' \in \D^*(S \tilde M, \tilde P)$. Then \eqref{eq:HadamardWaveEq} follows from $\omega^\pm$ being bi-solutions, \eqref{eq:HadamardConjugation} follows by definition, and \eqref{eq:HadamardWF} follows from \eqref{eq:omegaWF}. Condition \eqref{eq:HadamardAnticommutator} is a consequence of \eqref{eq:omegaAnticommutator}. Hence, $\omega$ is a Hadamard two-point function on $\tilde M$.\footnote{An equivalent approach for the construction of a Hadamard two-point function on $\tilde M$ would be to consider the CAR-algebra corresponding to the above Hilbert space (supplemented by co-spinorial sections) and using the projection on the positive spectrum of $K$ to define a state \cite{ArakiCAR}. The corresponding two-point function fulfills the wave front condition, by \cite{SahlmannVerchPassivity}.}

It remains to transport $\omega$ to $M$. By the isometry of a neighborhood $\tilde U$ of $\tilde \Sigma$ and a neighborhood $\tilde U'$ of $\tilde \Sigma'$, we can push-forward $\omega|_{\tilde U \times \tilde U}$ to a distribution on $\tilde U' \times \tilde U'$. Using the equation of motion, we extend it to the entire $M' \times M'$. By the isometry of neighborhoods $U'$, $U$ of $\Sigma'$ and $\Sigma$, we may transfer it to $M$ and again use the equation of motion there to extend it to $M \times M$. Due to the coincidence of Cauchy data, it is clear that the symmetric part still coincides with the fundamental solution. It remains to show that the Hadamard property is conserved under the extension procedure. By the propagation of singularity theorem, one only has to show that no elements $(x, \xi; y, 0)$ or $(x, 0; y, \eta)$ may appear in the wave front set. As the two-point function $\omega$ gives rise to a quasi-free state on the Cauchy data on $\tilde \Sigma$, one may use the calculus of Hilbert space valued distributions and argue as in \cite[Sec.~4.2]{SandersDirac} to show that the wave front set may not contain such elements. We have thus proven:
\begin{theorem}
There exist Hadamard two-point functions on each $\F(SM, P)$.
\end{theorem}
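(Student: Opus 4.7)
The plan is to execute the Fulling--Narnhofer--Wald deformation argument already invoked above. By Proposition~\ref{prop:Deformation}, any $(SM, P, A, m) \in \CatGSpMan$ is linked by two ``bridges'' (isometric neighborhoods of Cauchy surfaces with matching bundle and Yukawa data) to an ultrastatic spacetime $\tilde M = \R \times \tilde \Sigma$ on which the Yukawa field and, in a suitable gauge, the connections are time-independent. It therefore suffices to build a Hadamard two-point function on $\tilde M$ and then transport it back to $M$.

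On $\tilde M$ I would recast the Dirac equation as a Schr\"odinger-type evolution $i \del_t \psi = - K \psi$ on the Hilbert space $L^2(\tilde \Sigma, D_\rho \tilde M|_{\tilde \Sigma})$, combine completeness of $\tilde \Sigma$ (via Beem--Ehrlich) with Chernoff's theorem to obtain essential self-adjointness of $K$, and then extract positive/negative frequency bisolutions $\omega^\pm$ by functional calculus on $K$ along the lines of \cite{WrochnaDirac}. Since the gauge charge enters only as a finite-dimensional tensor factor $V$ in the fiber, the microlocal estimates giving \eqref{eq:omegaWF} and the spectral identity giving \eqref{eq:omegaAnticommutator} carry over with no essential modification. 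From $\omega^\pm$ I would assemble the double-spinor distribution $\omega$ by symmetrizing against the conjugation $^+$ exactly as in the prose above; then \eqref{eq:HadamardWaveEq}, \eqref{eq:HadamardAnticommutator}, \eqref{eq:HadamardConjugation} and \eqref{eq:HadamardWF} reduce respectively to the bisolution property of $\omega^\pm$, to \eqref{eq:omegaAnticommutator}, to construction, and to \eqref{eq:omegaWF}.

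To transport $\omega$ from $\tilde M$ to $M$, I would push it forward through the isometry of a neighborhood of $\tilde \Sigma$ with a neighborhood of $\tilde \Sigma' \subset M'$, extend globally on $M'$ using the equation of motion and the uniqueness of the retarded/advanced propagators for $D^\oplus$, and then iterate through the second bridge to land on $M$. The bisolution property, the conjugation symmetry, and (by coincidence of Cauchy data) the anticommutator relation all pass through these steps by construction. I expect the main obstacle to be preservation of the wave front condition \eqref{eq:HadamardWF}: although the propagation of singularities theorem transports the ``null-pair'' part of $\WF(\omega)$ correctly, one must separately rule out elements of the form $(x, \xi; y, 0)$ and $(x, 0; y, \eta)$. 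To handle this I would view $\omega$ restricted to the initial Cauchy surface as arising from a quasi-free state on the CAR algebra built from Cauchy data, then invoke the Hilbert-space-valued distribution argument of \cite[Sec.~4.2]{SandersDirac}, which adapts to the gauged fermionic setting without structural change since the representation space $V$ decouples from the microlocal analysis.
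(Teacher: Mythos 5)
Your proposal follows the paper's own proof essentially verbatim: the deformation to an ultrastatic spacetime via Proposition~\ref{prop:Deformation}, essential self-adjointness of $K$ via completeness of $\tilde\Sigma$ and Chernoff's theorem, construction of $\omega^\pm$ as in \cite{WrochnaDirac}, assembly of the double-spinor $\omega$ via the conjugation, transport back through the two bridges using the equation of motion, and the exclusion of wave front set elements with a vanishing covector via the quasi-free state on Cauchy data and the argument of \cite[Sec.~4.2]{SandersDirac}. This is correct and matches the paper's route step for step.
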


\subsection{Fields}
\label{sec:Fields}

In the setting of local covariant field theories,
fields are objects defined on all backgrounds simultaneously, in a coherent way \cite{BrunettiFredenhagenVerch}. In the categorical language, this is encoded in requiring that they are natural transformations $\Phi: \T_c \to \A_\loc$, where $\T_c$ was defined in \eqref{eq:Tens}. An example are the linear fields
\begin{equation}
\label{eq:LinearField}
 \psi_{(SM, P)}(u)_\omega(B) \doteq \int \skal{u}{B_1}(x) \ud_g x, \quad u \in \T_c^{1 0}(SM,P) =\D^\oplus(SM, P),
\end{equation}
which are natural transformations $\T_c^{10} \to \A_\loc$. We note that there is no dependence on $\omega$ on the r.h.s., as all the operators $\Gamma_{\omega_a' - \omega_a}$, \cf \eqref{eq:F_omega'}, vanish on this functional, since it is linear in the configuration. We also note that it fulfills
\begin{align*}
 \psi_{(SM, P)}(u)^* & = \psi_{(SM, P)}(u^+) \\
 \psi_{(SM, P)}(u) \star \psi_{(SM, P)}(v) + \psi_{(SM, P)}(v) \star \psi_{(SM, P)}(u) & = i \hbar S^\oplus_{(SM, P)}(u,v).
\end{align*}
By choosing $u$ to be a pure cospinor (spinor), one obtains the usual spinor (cospinor) fields, which, in an abuse of notation, will be denoted by $\psi$ and $\psi^+$ in Section~\ref{sec:Currents}.

\begin{remark}
\label{rem:GaugeCovariance}
The fields we consider are in general not gauge invariant, but gauge covariant, in the sense that we may integrate configurations with test sections (elements of $\T_c$) that transform nontrivially under the gauge group action, i.e., morphisms of the form \eqref{eq:GaugeTrafo}.\footnote{A morphism of the form \eqref{eq:GaugeTrafo} induces an isomorphism $\T_c(\chi_P): \T_c(SM,P) \to \T_c(SM',P')$. But we had $P'=P$, $SM' = SM$, so this yields an automorphism of $\T_c(SM, P)$. By a nontrivial transformation under the gauge group, we mean that this automorphism does not act as the identity.} In this respect we differ from the setting of \cite{HollandsYM}, where the ``local and covariant functionals'' are required to be gauge invariant, \cf Section~2.1 there.
\end{remark}

In contrast, the definition of nonlinear fields (Wick powers), i.e., natural transformations $\T_c^{j A} \to \A_\loc$ for $j>1$, is not straightforward. The problem is to define them on all backgrounds and at the same time fulfill the relations \eqref{eq:F_omega'} and \eqref{eq:A_morphism}. The crucial point is to find a trivializing distribution $H$ which is covariantly assigned to each background and is such that $\omega - H$ is smooth for all Hadamard two-point functions. These are the parametrices, which we define as follows:

\begin{definition}
\label{def:Parametrix}
A \emph{parametrix} $H$ is a quasi-covariant assignment $(SM, P, A, m) \to H \in \Gamma_c^\infty(U, D^\oplus_\rho M \boxtimes D^\oplus_\rho M)'$, where $U$ is a neighborhood of the diagonal of $M \times M$, such that \eqref{eq:HadamardAnticommutator}, \eqref{eq:HadamardConjugation}, \eqref{eq:HadamardWF} hold.
\emph{Quasi-covariance} here means that for $\chi: D_\rho^\oplus M \to D_\rho^\oplus M'$ the bundle morphism corresponding to a morphism $(SM, P, A, m) \to (SM', P', A', m')$ we have that $H - \chi^* H'$ is smooth on the common domain and vanishing at the diagonal, together with all the derivatives.
\end{definition}

We note that the choice of the domain $U$ is irrelevant, as for our purposes $H$ only needs to be known in an arbitrarily small neighborhood of the diagonal. The requirement of quasi-covariance is crucial for the constructions presented below to be covariant. To our opinion, this aspect is not properly emphasized in \cite{DHP09}, at least not explicitly. 
A consequence of the definition is the following:
\begin{proposition}
The difference $H - \omega$ is smooth on the domain $U$ for any Hadamard two-point function $\omega$ and any parametrix $H$.
\end{proposition}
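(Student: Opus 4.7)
The plan is to run the standard Radzikowski-type symmetry argument: use the anticommutator identity \eqref{eq:HadamardAnticommutator} to force $H-\omega$ to be antisymmetric, and then use that antisymmetry to double the wave front bound from $C_+$ to $C_+ \cap C_-$, which is empty.

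First I would work on the common domain $U$ (the neighborhood of the diagonal on which $H$ is defined) and observe that, since both $\omega$ and $H$ satisfy \eqref{eq:HadamardAnticommutator} with the same $S^\oplus$, the difference $d \doteq H - \omega$ satisfies $d(u,v) + d(v,u) = 0$. Thus if $\sigma: M \times M \to M \times M$ denotes the swap of factors, lifted canonically to a bundle automorphism of $D_\rho^\oplus M \boxtimes D_\rho^\oplus M$ (which is well-defined because both factors are the same bundle and the pointwise pairing \eqref{eq:Pairing} is symmetric in its two arguments), we have $\sigma^* d = -d$. Hence $\WF(d) = \sigma(\WF(d))$ as subsets of $\dot T^*(M \times M)$.

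Second I would combine the wave front conditions: from \eqref{eq:HadamardWF} applied to both $\omega$ and $H$ one has $\WF(d) \subset C_+$, and therefore $\WF(d) \subset C_+ \cap \sigma(C_+)$. A direct inspection of the definition of $C_\pm$ shows $\sigma(C_+) = C_-$: if $(x_1, x_2; k_1, -k_2) \in C_+$ with $(x_1;k_1) \sim (x_2;k_2)$ and $k_1 \in \bar V^+_{x_1}$, then under the swap it becomes $(x_2, x_1; -k_2, k_1)$, which, written as $(y_1, y_2; \ell_1, -\ell_2)$ with $\ell_1 = -k_2$ and $\ell_2 = -k_1$, satisfies $(y_1;\ell_1) \sim (y_2;\ell_2)$ by parallel transport along the reverse geodesic, and $\ell_1 = -k_2 \in \bar V^-_{x_2}$ since parallel transport preserves the causal type. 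Hence $\sigma(C_+) = C_-$.

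Third I would conclude: for a nonzero point $(x_1, x_2; k_1, -k_2)$ the covector $k_1$ cannot lie in both $\bar V^+_{x_1}$ and $\bar V^-_{x_1}$ unless $k_1 = 0$, which forces $k_2 = 0$ as well (parallel transport of zero is zero), contradicting the exclusion of the zero section. Thus $C_+ \cap C_- = \emptyset$, so $\WF(d) = \emptyset$, and $d = H - \omega$ is smooth on $U$.

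The argument is essentially algebraic once the bundle-valued wave front set calculus is accepted; the only point that requires care is verifying that the swap $\sigma$ lifts consistently to the bundle $D_\rho^\oplus M \boxtimes D_\rho^\oplus M$ in a way that is compatible with the pairing used to evaluate $\omega$ and $H$ on pairs of test sections. This is the step I expect to be the minor technical obstacle, but it is handled by working in a local trivialization, where the wave front analysis reduces to the scalar-valued Radzikowski argument componentwise.
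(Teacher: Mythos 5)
Your argument is correct and is essentially the paper's own proof (which follows Lemma 2.9 of Sanders): both rest on the fact that $\omega$ and $H$ share the same anticommutator, so the difference is antisymmetric, combined with $\WF \subset C_+$, the swap identity $\sigma(C_+)=C_-$, and $C_+\cap C_-=\emptyset$. The paper merely phrases this as ``a point of $C_+$ in $\WF(\omega-H)$ cannot be cancelled by symmetrization, yet the symmetric part vanishes,'' which is the same reasoning as your swap-invariance of the wave front set.
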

This is basically Lemma 2.9 of \cite{SandersHadamard}. For convenience, we include a proof. 
\begin{proof}
The distributional sections $\omega$ and $H$ share the same symmetric part, i.e., $\omega_s - H_s = 0$, where $\omega_s(u,u') \doteq \frac{1}{2}(\omega(u,u') + \omega(u',u))$. We also know that $\WF(\omega-H) \subset C_+$. Assume that $p \in C_+$ is contained in $\WF(\omega-H)$. As the distribution $(u, u') \mapsto \omega(u', u)$ has wave front set contained in $C_-$, and analogously for $H$, it follows that $p$ is also contained in $\WF(\omega_s - H_s)$, as it can not be cancelled by symmetrization of the distribution. But $\WF(\omega_s - H_s)$ is empty, so $\omega - H$ is smooth.
\end{proof}

\begin{remark}
Since Hadamard two-point functions exist, as proven above, it follows that a parametrix is a bi-solution up to smooth terms. Alternatively, one may argue as in the Note Added in Proof in \cite{Radzikowski}.
\end{remark}

With a parametrix $H$, we may associate to a local functional $F \in \F_\loc$ an element of $\A_\loc$ by
\begin{equation}
\label{eq:F_to_A}
 (F_H)_\omega \doteq \exp(\hbar \Gamma_{\omega - H}) F.
\end{equation}
This is well-defined as $H - \omega$ is smooth and the values of all its derivatives on the diagonal are unambiguous. As we only act on local functionals, the expression is well-defined even though $H$ is only defined in a neighborhood of the diagonal. There is a canonical natural transformation $\Psi: \T_c \to \F_\loc$, defined by
\begin{equation}
\label{eq:Psi}
 \Psi_{(SM, P)}(t)(B) \doteq \sum_{k=0}^\infty \int \skal{t^{\underline{\mu_1} \dots \underline{\mu_k}}}{\nabla^{\oplus 1}_{(\underline{\mu_1})} \dots \nabla^{\oplus k}_{(\underline{\mu_k})} B_k}(x) \ud_g x,
\end{equation}
where $\underline{\mu_i}$ are multiindices and $\nabla_{(\underline{\mu_i})}^i$ denotes the symmetrized covariant derivative on the $i$th coordinate, with $\nabla^\oplus \doteq \nabla \oplus \nabla^*$. Composing $\Psi|_{\T_c^{j A}}$ with the map \eqref{eq:F_to_A}, we obtain fields, called the \emph{Wick powers}.
Hence, given a parametrix, a plethora of fields is available.

In order to show that parametrices exist, let us first review the construction of the causal propagator. In order to get rid of the first order term in $P$, \cf \eqref{eq:P}, we introduce a new covariant derivative $\tilde \nabla_\mu \doteq  \nabla_\mu -  m \gamma_\mu$. Then we have
\[
 P = g^{\mu \nu} \tilde \nabla_\mu \tilde \nabla_\nu + \tfrac{1}{4} [\gamma^\mu, \gamma^\nu] ({\mathfrak{R}}_{\mu \nu} - i F_{\mu \nu}) - (n -1) m^2.
\]
Analogously, we proceed with $P^* = D^* D^*$, by using $\tilde \nabla^*_\mu \doteq \nabla^*_\mu + m \gamma_\mu$.

On each causal domain $\Omega$, i.e., a geodesically convex domain which is globally hyperbolic, the \emph{Hadamard coefficients} $V_k \in \Gamma^\infty(\Omega \times \Omega, D_\rho M \times D^*_\rho M)$ are recursively defined by the transport equation 
\begin{equation*}
%\label{eq:TransportEquation}
 \tilde \nabla_{\del \Gamma} V_k - \left( - \tfrac{1}{2} \nabla^\mu \del_\mu \Gamma - n + 2k \right) V_k = 2 k P V_{k-1}, 
\end{equation*}
with the initial condition $V_0(x, x) = \id_{D_\rho M_x}$. Here all derivatives act on the first coordinate and $\Gamma(x, x')$ is the negative of the squared geodesic distance along the unique geodesic connecting $x$ and $x'$. The transport equation defines the Hadamard coefficients locally and covariantly. Analogously, one defines the Hadamard coefficients for $P^*$. The retarded/advanced propagator for $P$ can now be approximated on $\Omega \times \Omega$ up to a smooth section $r_{\ret / \adv}$ \cite[Sec.~2.4]{BGP07},
\begin{equation}
\label{eq:DeltaApprox}
 \Delta_{\ret/\adv}(x,x') - \sum_{j=0}^{\infty} \chi(\Gamma(x,x') / \eps_j) V_j(x,x') R_\pm(2+2j)(x,x') = r_{\ret/\adv}(x,x'),
\end{equation}
where $r_{\ret / \adv}(x,x')$ vanishes unless $x$ is in the causal future/past of $x'$.
Here $\chi : \R \to \R$ is a smooth compactly function identical to $1$ on $[-1,1]$, and the sequence $\{ \eps_j \}$ of positive reals is chosen to ensure convergence. The distributions $R_\pm(j)$ are so-called Riesz distributions, whose singular support is the light cone.

Likewise, there are parametrices $h_\pm$ for $P$ on $\Omega \times \Omega$, given by 
\begin{equation*}
%\label{eq:hFormal}
 h_\pm(x,x') \doteq \frac{1}{2 \pi} \sum_{j=0}^\infty \chi(\Gamma(x,x') / \eps_j) V_j(x,x') T_\pm(2+2j)(x,x'),
\end{equation*}
where $T_\pm(j)$ are certain distributions, which for $j \in \{ 0, 2, 4, \dots \}$ fulfill
\begin{align}
\label{eq:TR}
 T_+(j) - T_-(j) & = 2 \pi i (R_+(j) - R_-(j)), \\
\label{eq:WF_T}
 \WF(T_\pm(j)) & \subset C_\pm.
\end{align}
Furthermore, for $2j \geq n$, we have
\begin{equation*}
 T_\pm(2j) = c_{2j} \Gamma^{[j-n/2]} \begin{cases} \log \Gamma_{\pm \eps}/ \Lambda & n \text{ even,} \\ \Gamma_{\pm \eps}^{1/2} & n \text{ odd,} \end{cases}
\end{equation*}
where $\Lambda$ is a fixed length scale and $\Gamma_\eps$ is $\Gamma$ equipped with a suitable $i \eps$ description at $x = x'$. The singular behavior stems entirely from $\log \Gamma_{\pm \eps}$ or $\Gamma_{\pm \eps}^{1/2}$. Now the $\eps_j$ may be chosen such that, for $N \geq n/2$,
\[
 \sum_{j = N}^\infty c_{2j+2} \chi(\Gamma/\eps_j) \Gamma^{[j+1-n/2]} V_j 
\]
converges in $C^k$ for all $k$ \cite[Lemma~2.4.2]{BGP07}, i.e., it is smooth. Note that when evaluating derivatives of this expression at coinciding points $x = x'$, only a finite number of terms are nonzero, and these are independent of the $\{ \eps_j \}$. It follows that by changing the $\{ \eps_j \}$, one does not change the coinciding point limit of derivatives of this expression. This ensures the quasi-covariance of the construction. 
Also note that we may choose the same sequence $\{ \eps_j \}$ as in \eqref{eq:DeltaApprox}.
Hence, by \eqref{eq:WF_T}, we have $\WF (h_\pm) \subset C_\pm$.
%By construction and \eqref{eq:WF_T}, $h_\pm$ fulfill \eqref{eq:HadamardWF}.
Furthermore, by \eqref{eq:TR} and \eqref{eq:DeltaApprox}, 
\[
 r \doteq h_+ - h_- - i \Delta
\]
is smooth, where $\Delta \doteq \Delta_\ret - \Delta_\adv$. Define $\tilde h_\pm \doteq h_\pm \mp \frac{1}{2} r$ so that $\tilde h_+ - \tilde h_- = i \Delta$. Note that, due to the support properties of $r_{\ret / \adv}$,  $r$  vanishes, together with all derivatives, at the diagonal. By covering $M$ with causal domains $\Omega_i$ define the neighborhood $U \doteq \cup_i (\Omega_i \times \Omega_i)$ of the diagonal and choose a corresponding partition of unity $\chi_i$ of $U$. Then define $\tilde h_\pm$ on $U$ by $\tilde h_\pm \doteq \sum_i \chi_i \tilde h_{i \pm}$, where the $\tilde h_{i\pm}$ are constructed as described above.

We recall that the retarded/advanced propagators $\Delta^{(*)}_{\ret/\adv}$ for $P^{(*)}$ are related by \cite[Lemma~3.4.4]{BGP07}
\[
 \int \skal{f'}{\Delta_{\ret/\adv} f}(x) \ud_{ g} x = \int \skal{\Delta^*_{\adv/\ret} f'}{f}(x) \ud_{ g} x.
\]
It follows that the corresponding causal propagators are related as
\[
 \int \skal{f'}{\Delta f}(x) \ud_{ g} x = - \int \skal{\Delta^* f'}{f}(x) \ud_{ g} x.
\]
Hence, the distributions $\tilde h^*_\pm \in \Gamma^\infty_c(U, D_\rho M \times D_\rho M^*)'$ defined by
\begin{equation}
\label{eq:h*}
 \tilde h^*_\pm(f,f') \doteq \tilde h_\mp(f',f)
\end{equation}
fulfill $\WF (\tilde h^*_\pm ) \subset C_\pm$ and $\tilde h^*_+ - \tilde h^*_- = i \Delta^*$.

As discussed in Section~\ref{sec:DiracOperator}, the retarded/advanced propagators defined by \eqref{eq:DefS_ra} fulfill $S_{\ret / \adv}^{(*)} \circ D^{(*)} = \id$. 
In particular,
\[
 D^{(*)} \circ \Delta^{(*)}_{\ret/\adv} \circ D^{(*)} = \id.
\]
Hence, we could also define the retarded/advanced propagator as
\begin{equation*}
 S^{(*)}_{\ret/\adv} = \Delta^{(*)}_{\ret/\adv} \circ D^{(*)},
\end{equation*}
but as it is unique, the two definitions coincide. A parametrix $H \in \Gamma^\infty_c(U, D_\rho^\oplus M^2)'$ for $D^\oplus$ can now be defined as
\begin{align*}
 H(f',f) & \doteq \tfrac{1}{4} \left( \tilde h_+(D^* f', f) + \tilde h_+(f', D f) + \overline{ \tilde h_-(f^+, D f'^+)} + \overline{ \tilde h_-(D^* f^+, f'^+)} \right), \\
 H(f,f') & \doteq - \tfrac{1}{4} \left( \tilde h_-(D^* f', f) + \tilde h_-(f', D f) + \overline{ \tilde h_+(f^+, D f'^+)} + \overline{ \tilde h_+(D^* f^+, f'^+)} \right), \\
 H(f,g) & \doteq 0, \\
 H(f',g') & \doteq 0,
\end{align*}
where $f,g \in \Gamma_c^\infty(M, D_\rho M)$, $f',g' \in \Gamma_c^\infty(M, D_\rho^* M)$. Note that, by the above discussion on the retarded/advanced propagator,
\begin{align*}
 \tilde h_+(D^* f', f) - \tilde h_-(D^* f', f) & = i \Delta(D^* f', f) = i S(f', f) \\
 &= i \Delta(f', D f) = \tilde h_+(f', D f) - \tilde h_-(f', D f)
\end{align*}
so $H$ has the anticommutator property. Hence, we have shown:
\begin{proposition}
Parametrices exist.
\end{proposition}

\begin{remark}
The construction differs from constructions in the literature, \cf \cite{SahlmannVerchHadamard, DHP09}, by the fact that we do not use an auxiliary operator $\tilde D = \gamma^\mu \nabla_\mu + m$ to define $P$, so that our $P$ has first order terms that we have to deal with by a change of the connection. The advantage of our construction is that we may use an average $\tilde h_+ (D^* f', f) + \tilde h_+(f', D f)$ in the definition of $H$, which facilitates the proof of current conservation, \cf Section~\ref{sec:CurrentConservation}. A similar construction with auxiliary operators would require $D$ and $\tilde D$ to commute, which is only the case if $m$ is constant.
\end{remark}

\begin{remark}
\label{rem:Scaling}
%For even $n$, the distributions $T_\pm$ involve a term of the form $\log \Gamma_\eps / \Lambda^2$, where $\Lambda$ is a length scale that has to be introduced to make the logarithm well-defined.\footnote{In \cite{DHP09} it is proposed to choose $\Lambda$ proportional to the inverse mass $m^{-1}$. This only works if the mass is constant and non-zero. In particular, this prescription violates the smoothness condition introduced below (adapted such that $m$ is required to be constant).}
The length scale $\Lambda$ that has to be introduced in $T_\pm$ for even $n$ is arbitrary, but has to be fixed to the same value on all backgrounds.\footnote{In \cite{DHP09} it is proposed to choose $\Lambda$ proportional to the inverse mass $m^{-1}$. This only works if the mass is constant and non-zero. In particular, this prescription violates the smoothness condition introduced below (adapted such that $m$ is required to be constant).} The need for such a scale plays an important role in the discussion of the axioms for time-ordered products in the following subsection and of the scaling behavior in Section~\ref{sec:Scaling}.
\end{remark}

\begin{remark}
\label{rem:WickUniqueness}
The parametrix, and hence the Wick powers, is not unique. One may always modify the parametrix by a smooth, locally and covariantly constructed function. In Section~\ref{sec:Currents}, we elaborate on this, and show that this freedom may be used to achieve a conserved stress-energy tensor. In the present setting, by modifying the parametrix, one modifies Wick squares and all higher order powers. For the scalar field, Hollands and Wald also allowed for redefinitions of the Wick powers that only affect the $k$th and higher order powers, for an arbitrary $k$ \cite{HollandsWaldWick}. To achieve this in the present setting, one would have to add to $\hbar \Gamma_{\omega - H}$ in \eqref{eq:F_to_A} operators of the form
\[
 \hbar^{k/2} \Gamma^k_{H_k} F \doteq \hbar^{k/2} \int H_k(x_1, \dots, x_k)  F^{(k)}(x_1, \dots, x_k) \ud_gx_1 \dots \ud_gx_k,
\]
where $H_k$ is smooth, locally and covariantly constructed, and defined in a neighborhood of $\diag^k$. But as such redefinitions are not necessary for the fulfillment of current and stress-energy conservation, we do not pursue this issue further.
\end{remark}

\subsection{Time-ordered products}

We now discuss the construction of renormalized time-ordered products in our setting. Let us start with the following definition:
\begin{definition}
The vector space $\MT_c^k$ of $k$-\emph{local test tensors} is defined as the $\Z/2$-graded $k$-fold tensor product of $\T_c$, where the grade of $B \in \T_c^{j A}$ is $\betrag{B} = j \mod 2$. A typical element is denoted by $B_1 \gt \dots \gt B_k$, where the hat indicates the graded tensor product. The vector space $\MT_c$ of \emph{multilocal test tensors} is defined as the direct sum of the $\MT_c^k$.
\end{definition}

We may now introduce the notion of multilocal fields. 
\begin{definition}
A \emph{multilocal field} is a natural transformation
\[
 \Phi: \MT_c \to \A,
\]
where, by composition with the forgetful functor, we interpret $\A$ as a functor between $\CatGSpMan$ and $\CatVec_i$.
\end{definition}
Obviously, this is a generalization of the notion of fields as introduced in Section~\ref{sec:Fields}. Often we will want to be more specific, and denote $\Phi_k^{\underline{j} \underline{A}}$ the induced natural transformation
\[
 \Phi_k^{\underline{j} \underline{A}} : \T_c^{j_1 A_1} \gt \dots \gt \T_c^{j_k A_k} \to \A.
\]
Here $\underline{j}$ and $\underline{A}$ are the multiindices containing the $j_i$, $A_i$. We recall that $j$ stands for the number of fields, and $A \in \N_0^j$ for the number of derivatives on the separate fields.  

There are several further conditions on time-ordered products. In order to formulate these, we introduce the concept of scaling. The idea is to relate the theory on the background $(SM, P, A, m)$ with the theory on another background $(SM', P', A', m')$, where $SM$ ($P$) and $SM'$ ($P'$) are isomorphic as principal $\Spin_0$ ($G$) bundles, and only the geometric data changes. Using these isomorphisms, we can identify configurations and test tensors on the two backgrounds. The nontrivial step is the setup of the isomorphism of $SM$ and $SM'$. For this, we proceed as follows: In deforming $SM$ to $SM'$ we keep the $\Spin_0$ bundle and only change the spin projection $\pi_S$. For that, we identify $FM$ with a principal $\Lor_0$ bundle $LM$. To construct $SM'$, we keep $LM$ and the projection from $SM$ to $LM$, but change the identification of $FM$ and $LM$. It is given by a vielbein, which we denote in local coordinates and some trivialization of $LM$ by $e^\mu_a$. Infinitesimally, we now translate a change of $g^{\mu \nu}$ into a change of $e^\mu_a$ by $\delta e^\mu_a = - \frac{1}{2} e^\nu_a g^{\mu \lambda} \delta g_{\nu \lambda}$. This corresponds to the method used in \cite{ForgerRomer} to compute the stress-energy tensor of Dirac fields.

Let us now explicitly construct a scaled background $(SM', P', A', m')$.
%For that, we restrict to $M$ with trivial topology, i.e., we consider the local theory.
We set $P' = P$, $A' = A$, and $SM' = SM$ (as a $\Spin_0$ bundle). In local coordinates, define\footnote{In the language of \cite{ForgerRomer}, this means that the fields transform according to their Weyl dimension.} $g'_{\mu \nu} = \lambda^{-2} g_{\mu \nu}$, $m' = \lambda m$. According to the above, this means ${e'}^\mu_a = \lambda^{-1} e^\mu_a$ for the vielbein.
%Also note that due to the triviality of $M$, this defines a connection $A'$ on $P'$, \cf \cite[Prop.~II.1.4]{KobayashiNomizu}.
Clearly, this transformation simply scales the Dirac operator. Analogously, the fundamental solutions, and hence also the Hadamard two-point functions scale. There is thus a $*$-isomorphism $\sigma_\lambda: \A(SM', P') \to \A(SM, P)$, acting on linear fields as
\begin{equation}
\label{eq:Def_sigma}
 \sigma_\lambda (\psi_{(SM',P')}(u))_\omega = \lambda^{-\frac{n+1}{2}} \psi_{(SM,P)}(u)_{\omega_\lambda},
\end{equation}
where $\omega_\lambda(u,v) = \lambda^{-n-1} \omega(u,v)$, \cf \cite[Lemma~4.2]{HollandsWaldWick} for a proof in the scalar case.
Note that here we used the identification of sections of $D^\oplus_\rho M$ and $D^\oplus_\rho M'$ induced by the bundle isomorphisms constructed above.
For a multilocal field $\Phi_k$, one may define another multilocal field $S_\lambda \Phi_k$ by
\[
 (S_\lambda \Phi_k)_{(SM, P)}(t) \doteq \lambda^{nk} \sigma_\lambda ({\Phi_k}_{(SM', P')}(\chi^* t)),
\]
where $t \in \MT_c^k$ and $\chi^*$ is the pullback to the scaled background.
The \emph{scaling dimension} of a field $\Phi_k^{\underline{j} \underline{A}}$ is defined as
\[
 d_{\Phi_k^{\underline{j} \underline{A}}} = \sum_{i = 1}^k \left( \tfrac{n-1}{2} j_i + \betrag{A_i} \right).
\]

The time-ordered products are now multilocal fields that fulfill further axioms. First of all, we require them to be well-defined as natural transformations
 \begin{equation}
\label{eq:TO_F_loc}
  T_k: \underbrace{\F_\loc \gt \dots \gt \F_\loc}_{k \text{ times}}  \to \A,
 \end{equation}
obtained by using $\Psi$, \cf \eqref{eq:Psi}, to map the elements of $\T_c$ to $\F_\loc$. Again, $\hat \otimes$ denotes the $\Z / 2$-graded tensor product where the grading refers to the grade of $F \in \F_\loc$ modulo $2$. Due to the integration, this induces relations between time-ordered products with different numbers of derivatives, called the Leibniz rule in \cite{HollandsWaldStress} and the Action Ward Identity in \cite{DuetschFredenhagenAWI}. 
In order to formulate it, we introduce a notation that will also be useful later on. The time-ordered product $T_k^{\underline{j} \underline{A}}$ may be seen as an $\A$-valued distributional section. Given a local trivialization, we write its integral kernel as 
\[
 T^{\underline{\alpha_1} \dots \underline{\alpha_k}}(x_1, \dots, x_k),
\]
where the $\underline{\alpha_i}$ are multiindices consisting of tuples $(a_l, \underline{\mu}_l)_{l \in \{ 1, \dots, j_i\}}$, where the $a_l$ are spinorial and gauge indices and the $\underline{\mu_l}$ spacetime multiindices with $\betrag{\underline{\mu_l}} = A_i(l)$. The Leibniz rule can then be formulated as
\[
 \nabla_i^\mu T^{\underline{\alpha_1} \dots \underline{\alpha_k}}(x_1, \dots, x_k) = \sum_{l=1}^{j_i} T^{\underline{\alpha_1} \dots (\underline{\alpha_i} +_l \mu) \dots \underline{\alpha_k}}(x_1, \dots, x_k) + \dots,
\]
where $\underline{\alpha} +_l \mu$ means adding $\mu$ to the multiindex $\underline{\mu_l}$ inside $\underline{\alpha}$, and the dots stand for lower order terms obtained by symmetrizing the derivatives.

There are a couple of further conditions:

\begin{description}
 \item[Support:]
 The support of $T_k(t)$, $t \in \MT_c^k$, \cf \eqref{eq:suppF}, is contained in $\supp_M t$, defined as
\[
  \supp_M t  = \{ x | (x, x_2, \dots x_k) \in \supp t \}.
\]
 \item[Causal factorization:] Let $t \in \MT_c^k$, $t' \in \MT_c^l$ be multilocal test sections such that $\supp_M t$ has no intersection with the past of $\supp_M t'$. Then
 \[
  T_{k+l}(t \gt t') = T_k(t) \star T_{l}(t').
 \]
 \item[Scaling:] The time-ordered products $T_k^{\underline{j} \underline{A}}$ scale \emph{almost homogeneously}, i.e., there are natural numbers $c_k^{\underline{j} \underline{A}}$ such that
\begin{equation}
\label{eq:Scaling}
 \left( \lambda \frac{\del}{\del \lambda} - d_{T_k^{\underline{j} \underline{A}}} \right)^{c_k^{\underline{j} \underline{A}}} S_\lambda T_k^{\underline{j} \underline{A}} = 0.
\end{equation}
 \item[Microlocal spectrum condition:]  Let $\omega$ be a quasi-free Hadamard state on $\A(SM, P)$. Then the wave front set of the distributional section $\omega(T^{\underline{\alpha_1} \dots \underline{\alpha_k}}(x_1, \dots, x_k))$ is contained in $C_T^k \subset T^* M^k$, defined through decorated graphs, \cf \cite{BrunettiFredenhagenScalingDegree, HollandsWaldTO}.
 \item[Smoothness:] The time-ordered products depend smoothly on the background fields. Thus, let $g_s, A_s, m_s$ depend smoothly on a parameter $s \in \R$. Let $\omega^s$ be a family of Hadamard states on $\A(SM^{(s)}, P^{(s)})$, with smooth truncated $n$-point functions that depend smoothly on $s$.
One then requires that 
\begin{multline*}
 \WF \left( \omega^{(s)} \left( T^{(s)}_k(x_1, \dots, x_k) \right) \right) \\ \subset \left\{ (s, \sigma; \{ x_i, \xi_i \}) \in \dot T^*(\R \times M^k) | (\{x_i, \xi_i \}) \in C_T^{k, (s)} \right\}.
\end{multline*}

 \item[Analyticity:] In the case of an analytic spacetime, the Wick products depend analytically on the background fields. This is made precise by a condition analogous to the one for smoothness.
\end{description}
There are further conditions which are most easily stated for time-ordered products interpreted as maps \eqref{eq:TO_F_loc}. However, it is clear that these can be reformulated for time-ordered products interpreted as multilocal fields.
\begin{description}
\item[Expansion:] The time ordered product commutes with functional differentiation, i.e.
\begin{equation}
\label{eq:TO_Expansion}
 T(F_1 \gt \dots \gt F_k)^{(1)}(x) = \sum_{i = 1}^k (-1)^{\sum_{l=1}^{i-1} \betrag{F_l}} T(F_1 \gt \dots \gt F_i^{(1)}(x) \gt \dots \gt F_k).
\end{equation}
\item[Unitarity:] We have
\[
T(F_1 \gt \dots \gt F_k)^* = \sum_{I_1 \sqcup \dots \sqcup I_j} (-1)^{n+j+\Pi} T( \hat \bigotimes_{i \in I_1} F_i^*) \star \dots \star T( \hat \bigotimes_{i \in I_j} F_i^*),
\]
where $I_1 \sqcup \dots \sqcup I_j$ denotes all partitions of $\{ 1, \dots, k \}$ into nonempty, pairwise disjoint subsets. $\Pi$ denotes a combinatorial factor, depending on the grades of the $F_i$ and the partition, which accounts for the reordering of the $F_i$ on the right hand side.
\item[Equation of motion:] If $\psi$ denotes the linear field \eqref{eq:LinearField}, then
\begin{equation}
\label{eq:TO_eom}
 T( \psi(D^\oplus u) \gt F_1 \gt \dots \gt F_k) = i \skal{T(\hat \bigotimes_i F_i)^{(1)}}{u} + \psi(D^\oplus u) \star T(\hat \bigotimes_i F_i).
\end{equation}
\end{description}

The time-ordered products of order 1 are simply the Wick powers, as defined by \eqref{eq:F_to_A}. As noted above, \cf Remark~\ref{rem:WickUniqueness}, these are not unique.

The rationale behind the axiom of almost homogeneous scaling is the following: Because the classical theory has homogeneous scaling, one would like to impose this condition also for the quantum theory. However, as discussed in Remark~\ref{rem:Scaling}, the parametrix contains a logarithmic term for $n$ even, which necessitates the choice of a scale. This breaks homogeneous scaling, and almost homogeneous scaling is the minimal generalization of homogeneous scaling such that Wick products exist. Also the extension of distributions necessary to define time-ordered products typically breaks scale invariance.

Due to the axiom of causal factorization, time-ordered products can be defined recursively, by extension of distributional sections defined on $M^k \setminus \diag^k$ to $M^k$ \cite{BrunettiFredenhagenScalingDegree}. The important point is to ensure locality and local Lorentz and gauge covariance in this extension, to preserve the functoriality. For the scalar field, this was performed in \cite{HollandsWaldTO}, see also \cite{HollandsYM}. In the following, we only describe the changes to the argument that are necessary to accomodate charged spinors.

Due to the Leibniz rule, the distributional sections $T^{\underline{j} \underline{A}}_k$ are not independent. The action of the derivation defines the subspace of the Leibniz dependent ones. As in \cite{HollandsWaldTO}, we may choose a complement of this subspace and only have to define the time-ordered products on a basis of this complement.

One considers a small enough neighborhood $U$ of a point $(x, \dots, x)$ on the diagonal $\diag^k$,
and expands a time-ordered product $T_0$ defined up to $\diag^k$ into Hadamard-ordered ones, i.e.,
\begin{multline*}
 T_0^{\underline{\alpha_1} \dots \underline{\alpha_k}}(x_1, \dots, x_k)_\omega  \\
 = \sum_{\underline{\beta}_i \subset \underline{\alpha}_i} c_{\underline{\alpha} \underline{\beta}} t_0^{\underline{\beta}_1 \dots \underline{\beta}_k}(x_1, \dots, x_k) \WDp{\Psi^{\underline{\alpha}_1 \setminus \underline{\beta}_1}(x_1) \dots \Psi^{\underline{\alpha}_k \setminus \underline{\beta}_k}(x_k)}_{\omega, H},
\end{multline*}
where the $c$'s are combinatorical constants, $t_0$ a distributional section, and
\[
 \WDp{\Psi^{\underline{\alpha}_1}(x_1) \dots \Psi^{\underline{\alpha}_k}(x_k)}_{\omega, H} \doteq \exp(\hbar \Gamma_{\omega-H}) \Psi^{\underline{\alpha}_1}(x_1) \dots \Psi^{\underline{\alpha}_k}(x_k).
\]
Here $\Psi^{\underline{\alpha}}(x)$ denotes the integral kernel of the map $\Psi$, \cf \eqref{eq:Psi}, interpreted as an $\F_\loc$-valued distributional section. 
The form of the above expansion follows from \eqref{eq:TO_Expansion}, \cf the discussion in \cite{HollandsWaldTO} for the scalar case.
Because of \eqref{eq:A_morphism} and the requirement on the parametrix, the distributions $t_0$ are gauge invariant in the following sense: The difference
\begin{multline}
\label{eq:t_0_gaugeInv}
 \rho(g(x_1))^{\underline{\alpha}_1}_{\underline{\beta}_1} \dots \rho(g(x_k))^{\underline{\alpha}_k}_{\underline{\beta}_k} t_0^{\underline{\beta}_1 \dots \underline{\beta}_k}[g A_\mu g^{-1} + g \del_\mu g^{-1}](x_1, \dots, x_k) \\ - t_0^{\underline{\alpha}_1 \dots \underline{\alpha}_k}[A_\mu](x_1, \dots, x_k)
\end{multline}
is smooth and vanishes, with all its derivatives, at the diagonal.

To extend $t_0$ to all of $U$, one proceeds as follows: Fix the last coordinate to $x$ and describe the coordinates $x_1, \dots, x_{k-1}$ by Riemannian normal coordinates $\xi_1, \dots, \xi_{k-1}$ \wrt $x$. It then suffices to extend the resulting distribution on $\R^{n(k-1)} \setminus \{ 0 \}$ to the origin. To do this in a local way, one performs a scaling expansion of $t_0$. In Riemannian normal coordinates and in a given trivialization, one defines the following family of metrics, masses, and gauge potentials:\footnote{For a description how to identify the sections of bundles for different geometric data, we refer to the discussion preceding the introduction of the scaling transformation in Section~\ref{sec:Fields}.}
\begin{align*}
 g^{(s)}_{\mu \nu}(\xi) & \doteq g_{\mu \nu}(s \xi), & m^{(s)}(\xi) & \doteq s m(s \xi), & A_\mu^{(s)}(\xi) & \doteq s A_\mu(s \xi). 
\end{align*}
Now one Taylor expands $t_0$ around $s = 0$, i.e.,
\[
 t_0 = \sum_{l=0}^p \frac{1}{l!} \tau_{0, l} + r_{0, p}
\]
with
\begin{align*}
 \tau_{0, l}(\cdot, x) & \doteq \frac{\ud^l}{\ud s^l} \left. t_0[g^{(s)}, m^{(s)}, A^{(s)}](\cdot, x) \right|_{s = 0}, \\
 r_{0,p}(\cdot, x) & \doteq \frac{1}{p!} \int_0^1 (1-s)^p \frac{\ud^p}{\ud s^p} t_0[g^{(s)}, m^{(s)}, A^{(s)}](\cdot, x) \ud s.
\end{align*}
By choosing $p$ large enough, one obtains a distribution $r_{0, p}$ with a low enough scaling degree to have a unique extension that preserves the scaling degree \cite{BrunettiFredenhagenScalingDegree}. Hence, it suffices to extend the $\tau_0$'s. As shown in the following, these may be decomposed as
\[
 \tau_{0, l}^{\underline \alpha}(y,x) = \sum_{\underline a \underline \mu} C_{\underline a \underline \mu}(x) \exp_x^* {u_{0,l}}^{\underline \alpha \underline{a} \underline{\mu}}(y).
\]
Here $C$ is a Lorentz and gauge tensor of mass dimension $l$ built from $g_{\mu \nu}$, and (covariant derivatives) of the curvature, the mass, and the field strength, all evaluated at $x$. The index $\underline{a}$ is a gauge multiindex. The distributions $u_{0,l}$ are spinorial, Lorentz, and gauge tensors, which are Lorentz invariant,
\begin{equation}
\label{eq:uLorentz}
 {u_{0,l}}^{\underline \alpha \underline a \underline \mu}( \Lambda(S) \cdot ) = S^{\underline \alpha}_{\underline \beta} \lambda(S)^{\underline \mu}_{\underline \nu} {u_{0,l}}^{\underline \beta \underline a \underline \nu}( \cdot ),
\end{equation}
where $S \in \Spin_0$ and the action on the $\alpha$ indices is on the spinorial and the tensorial component. They are also gauge invariant in the following sense:
\begin{equation}
\label{eq:uGauge}
 \rho( g )^{\underline \alpha}_{\underline \beta} \rho( g )^{\underline a}_{\underline b}  {u_{0,l}}^{\underline \beta \underline b \underline \mu} = {u_{0,l}}^{\underline \alpha \underline a \underline \mu}.
\end{equation}
Their scaling degree is $q-l$, where $q$ is the scaling degree of $t_0$ at the diagonal.

To prove this, one proceeds as follows: One assumes that $g_{\mu \nu}$, $m$, and $A_\mu$ are polynomials in $\xi$. They are thus entirely determined by the value of their derivatives, i.e., the jet space, at the origin. As a jet space basis of $A_\mu$, we may choose the following:
\begin{equation*}
 \del_{(\mu_1} \dots \del_{\mu_k} A_{\nu)}, \quad \nabla_{(\mu_1} \dots \nabla_{\mu_k} F_{\nu) \lambda}, \quad k \in \N_0.
\end{equation*}
If we now consider the infinitesimal version of \eqref{eq:t_0_gaugeInv}, we see that $t_0$ may not depend on the derivatives of $A$, as otherwise derivatives of the gauge parameter would appear which do not have to vanish at the diagonal. Hence, we may compute the $\tau_{0,l}$ as follows:
\begin{align*}
 \tau_{0,k}(y,x) & = \sum_{k = \sum_j (j l_j + (j+2) m_j + (j+1) p_j)} c_{lmp} \\
 & \times \frac{\del^{\sum (l_j + m_j + p_j)}}{\prod_j \del^{l_j} g_{\mu \nu, \sigma_1 \dots \sigma_j} \del^{m_j} F_{\mu \nu; \sigma_1 \dots \sigma_j} \del^{p_j} m_{,\sigma_1 \dots \sigma_j}} t_0(y,x)|_{g = \eta, m = 0, A = 0} \\
 & \times \prod_j (g_{\mu \nu, \sigma_1 \dots \sigma_j})^{l_j} (F_{\mu \nu; \sigma_1 \dots \sigma_j})^{m_j} (m_{,\sigma_1 \dots \sigma_j})^{p_j}.
\end{align*}
Here $c_{lmp}$ is a combinatorical factor. The factor in the third line then gives the tensor $C$, whereas the factor in the second line gives the distributions $u_{0,k}$. The gauge invariance \eqref{eq:uGauge} is now a consequence of \eqref{eq:t_0_gaugeInv} and the fact that we evaluate at $A = 0$. Note in particular that the distributions $u_{0,k}$ do not depend on the background fields any more, so they are ``universal'', and their extension to the origin defines a coherent extension of $t_0$ on all backgrounds simultaneously. For a discussion of how to extend in a way that preserves Lorentz invariance \eqref{eq:uLorentz} and almost homogeneous scaling, we again refer to \cite{HollandsWaldTO}. The preservation of the gauge symmetry is then straightforward if the gauge group is compact: For an extension $u$ define
\[
 \tilde u^{\underline \alpha \underline a \underline \mu} = \int_G \rho( g )^{\underline \alpha}_{\underline \beta} \rho( g )^{\underline a}_{\underline b} u_{\underline c}^{\underline \beta \underline b \underline \mu} \ud g.
\]
One then proceeds as in \cite{HollandsWaldTO} to arrive at:\footnote{The argument showing that it is possible to redefine the time-ordered products such that \eqref{eq:TO_eom} holds can be found in \cite{HollandsWaldStress}.}
\begin{proposition}
There exist covariant time-ordered products for any compact gauge group.
\end{proposition}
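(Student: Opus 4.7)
My plan is to argue by induction on $k$, adapting the scalar construction of Hollands and Wald \cite{HollandsWaldTO} with the modifications already set up in the preceding pages. The base case $k=1$ is the definition of Wick powers via \eqref{eq:F_to_A}. Assuming time-ordered products of order $<k$ have been constructed satisfying all the axioms, causal factorization and a partition of unity subordinate to a covering of $M^k \setminus \diag^k$ by causally orderable regions uniquely determines $T_k$ on $M^k \setminus \diag^k$. The task is thus reduced to extending the distributional coefficients $t_0$ of the Hadamard-normal-ordered expansion to $\diag^k$ while preserving locality, covariance, scaling, microlocal spectrum, smoothness, analyticity, unitarity, and the equation of motion.

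Using Riemannian normal coordinates around a point of the diagonal and the Taylor expansion in the scaling parameter $s$, the extension problem for $t_0$ reduces, as explained in the paragraphs above, to the extension to the origin of the universal distributions $u_{0,l}$ on $\R^{n(k-1)} \setminus \{0\}$, which by construction are Lorentz equivariant \eqref{eq:uLorentz} and gauge invariant \eqref{eq:uGauge}. The Taylor remainder $r_{0,p}$ is uniquely extended for sufficiently large $p$ by Steinmann's theorem on the preservation of the scaling degree \cite{BrunettiFredenhagenScalingDegree}. Each coefficient $u_{0,l}$ is then extended by the procedure of \cite{HollandsWaldTO}, which yields a Lorentz equivariant, almost homogeneously scaling, smooth, and (in the analytic case) analytic extension satisfying the microlocal spectrum condition, possibly at the price of a logarithmic dependence on $\Lambda$ whenever the scaling degree of $u_{0,l}$ saturates the critical value.

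To restore gauge invariance, any such extension $u$ is replaced by its Haar average
\[
 \tilde u^{\underline{\alpha} \underline{a} \underline{\mu}} \doteq \int_G \rho(g)^{\underline{\alpha}}_{\underline{\beta}} \rho(g)^{\underline{a}}_{\underline{b}} u^{\underline{\beta} \underline{b} \underline{\mu}} \, \ud g,
\]
which is well defined because $G$ is compact. Since the $G$-action is fibrewise and commutes with the Lorentz action, with the scaling $\sigma_\lambda$, with covariant differentiation in the smoothness and analyticity axioms, and with the spinor conjugation $^+$, all properties established in the Hollands--Wald extension survive the averaging; and since $u$ already coincides with the $G$-invariant distribution away from the origin, $\tilde u$ is still a legitimate extension of $u_{0,l}$. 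Finally, the finite, polynomial-in-curvature redefinitions that enforce unitarity and the equation of motion \eqref{eq:TO_eom} can be carried out as in \cite{HollandsWaldStress} and are automatically gauge covariant (or can be $G$-averaged again without spoiling anything).

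The main obstacle to anticipate is the simultaneous compatibility of the newly imposed gauge covariance with the Leibniz rule, almost homogeneous scaling, and the microlocal spectrum condition. These are however preserved because the $G$-action is independent of the spacetime variables, commutes with $\sigma_\lambda$ and with the covariant derivatives entering the Leibniz rule, and leaves the characteristic sets $C_T^k$ invariant. Compactness of $G$ is used in an essential way, since it is what guarantees the existence and continuity of the Haar average; for non-compact gauge groups one would need a cohomological, order-by-order treatment, which lies outside the present framework.
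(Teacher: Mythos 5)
Your proposal follows essentially the same route as the paper: recursive construction via causal factorization, expansion into Hadamard-ordered products, scaling expansion in Riemannian normal coordinates reducing the problem to extending the universal, Lorentz- and gauge-invariant distributions $u_{0,l}$ by the Hollands--Wald procedure, and restoration of gauge invariance by Haar averaging over the compact group $G$, with unitarity and the equation of motion handled as in \cite{HollandsWaldStress}. This matches the paper's argument, so there is nothing further to add.
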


\section{Currents and conservation laws}
\label{sec:Currents}

As discussed in Remark~\ref{rem:WickUniqueness}, the definition of Wick powers, i.e., time-ordered products at first order, is not unique, but allows for some renormalization freedom. To ease the discussion, let us introduce a succinct notation for the Wick powers. For example, the Wick power $\Psi = \psi^+_a \psi^b$, evaluated in a test tensor $t$ and on a configuration $B$, is given by
\[
 \Psi_{(SM, P)}(t)(B) = \int t^a_{\ b}(x,x) (B_2)_a^{\ b}(x,x) \ud_g x,
\]
for $t \in \Gamma_c^\infty(M, D_\rho^\oplus M \wedge D_\rho^\oplus M)$, where we pick the component of $B_2$ whose first entry is in the dual Dirac bundle and whose second entry in the Dirac bundle. Here $a,b$ stand for combined spinor and gauge indices. Derivatives on one of the $\psi$'s translate into derivatives on the corresponding variable of $B_2$.

For example, to modify the definition of the Wick power $\Psi = \psi^+_a \psi^b$, we can modify the parametrix as
\[
 H_a^b(x,y) \to H_a^b(x,y) + \delta H_a^b(s(x,y)),
\]
where $s(x,y)$ is the point $\gamma_{x,y}(1/2)$, where $\gamma_{x,y}: [0,1]: M \to M$ is the unique geodesic from $x$ to $y$, and $\delta H(z)$ is a covariant tensor defined from the jet of the background fields at $z$. Due to the scaling axiom for time-ordered products, only modifications $\delta H$ with the correct scaling dimension are admissible, so for $n=4$, we have possibilities like
\[
 \delta H = \alpha_0 m^3 + \alpha_1 m R + \alpha_2 i m \gamma^\mu \gamma^\nu F_{\mu \nu} + \alpha_4 \gamma^\mu \nabla_\mu R. 
\]
Such a redefinition of course also affects the Wick product $\Psi_\mu = \nabla_\mu \psi^+_a \psi^b$. However, it can also be redefined independently, by
\begin{equation}
\label{eq:deltaH_nabla}
 H_a^b(x,y) \to H_a^b(x,y) + {\delta H_\mu}_a^b(s(x,y)) \del^\mu_x s(x,y),
\end{equation}
where $\delta H_\mu$ is a covariant tensor of mass dimension $4$ (for $n=4$). The ambiguity in the definition of Wick powers was first discussed by Hollands and Wald for the scalar case \cite{HollandsWaldWick}. They also showed that it may be used to achieve a conserved stress-energy tensor for the scalar field in dimension $n>2$ \cite{HollandsWaldStress}. In the following, we perform an analogous analysis for the Dirac field.

\subsection{Current conservation}
\label{sec:CurrentConservation}

We want to show that with our choice of the parametrix, the current
\[
 j^\mu_\alpha = \tr \psi^+ T_\alpha \gamma^\mu \psi
\]
is covariantly conserved. Here we used the same succinct notation as above. $T_\alpha$ is a generator of $\g$ in the representation $\rho$ and the trace is over the spinor and the gauge indices. Note that in order to view this as a field, one has to enlarge the space of test tensors to also include sections of the bundle $P \times_{\ad} \g \otimes T^*M$. For the divergence of the current, we compute
\[
 \nabla_\mu (\tr \psi^+ T_\alpha \gamma^\mu \psi) = \tr (D \psi)^+ T_\alpha \psi - \tr \psi^+ T_\alpha D \psi,
\]
so the $\order(\hbar^0)$ of this Wick power vanishes weakly, i.e., on all on-shell configurations. A possible violation of current conservation can thus only stem from the $\order(\hbar)$ term, i.e., the parametrix. In order to determine this violation, we have to compute
\begin{multline}
\label{eq:H-H}
 H(D^* f', f) - H(f', Df) = \tfrac{1}{4} \left( \tilde h_+(P^* f', f) - \tilde h_+(f', P f) \right. \\
 \left. + \overline{\tilde h_-(f^+, P {f'}^+)} - \overline{\tilde h_-(P^* f^+, {f'}^+)} \right),
\end{multline}
or more precisely, determine the coinciding point limit of the corresponding distribution, and trace it with generators of the gauge group. The computations performed in \cite[Lemma~2.1]{MorettiStressEnergy} also apply to the present case, so
\[
 [\tilde h_\pm \circ P](x) = \begin{cases} 0 & n \text{ odd} \\ c_n [V_{n/2}](x) & n \text{ even}, \end{cases}
\]
where $V_i$ are the Hadamard coefficient for $P$, $c_n$ are real numbers, and the square brackets denote the coinciding point limit.
%Note again that $h_\pm$ and $\tilde h_\pm$ differ by a smooth function all of whose derivatives vanish on the diagonal.
Furthermore, we have, by \eqref{eq:h*}, $\tilde h_+(P^* f', f) = \tilde h^*_-(f, P^* f')$, so that
\[
 [P \circ \tilde h_\pm](x) = \begin{cases} 0 & n \text{ odd} \\ c_n [V_{n/2}^*]^*(x) & n \text{ even}, \end{cases}
\]
where $V_i^*$ are the Hadamard coefficients for $P^*$. As a consequence of \cite[Thm.~6.4.1]{Friedlander}, we have $[V_n^*]^* = [V_n]$, so the contributions in \eqref{eq:H-H} cancel. So we have shown:
\begin{proposition}
With the parametrix as defined in Section~\ref{sec:Fields}, the current is covariantly conserved.
\end{proposition}

\begin{remark}
The proof relies on the fact that we may use the combination $\tilde h_+(D^* f', f) + \tilde h_+(f', D f)$ in the definition of the parametrix $H$. This is not possible if $D$ is not an endomorphism, as in the case of chiral fermions. Hence, one expects the usual chiral anomalies. The relevance of $D$ being an endomorphism for the occurrence of anomalies was already discussed in \cite{AlvarezGaumeWitten} from the point of view of the Euclidean path integral.
\end{remark}

Let us discuss the remaining ambiguity. A redefinition leading to
\[
 j^\mu_\alpha \to j^\mu_\alpha + r^\mu_\alpha
\]
would require the existence of a local and covariant vector $r^\mu_\alpha$ that is conserved. The only such vector is the external current $J^\mu_\alpha$ responsible for the background field. It follows that $j^\mu_\alpha$ is uniquely defined up to multiples of $J^\mu_\alpha$. In particular it is unique in regions that are void of charges and currents. The fact that for quantum electrodynamics in external potentials, there is an ambiguity proportional to the external current was already discussed by Schwinger \cite{SchwingerQED2}, in a setting where the external potential was treated as a perturbation. This can be interpreted as a charge renormalization. Evaluation in a state (which amounts to computing a certain limit of the difference of the corresponding two-point function and the parametrix) then yields the expectation value of the current, which could be used to estimate back-reaction effects.

\begin{remark}
For the case of a flat background and the gauge group $G = U(1)$ in the fundamental representation, the use of a local renormalization scheme based on the parametrix was already proposed by Marecki \cite[Sec.~VI.7]{MareckiThesis}.\footnote{In the seminal work of Euler and Heisenberg \cite{EulerHeisenberg}, one finds the same approach of subtracting a (essentially unique) reference object from the two-point function (which involves the choice of a state). This idea goes back to Dirac \cite{Dirac34}. However, the reference object Euler and Heisenberg employ is not the parametrix. The difference is not only that there they use the mass to fix the scale $\Lambda$, but also the Hadamard coefficients $V_k$ disagree. For example, the coinciding point limit of the analog of $V_1$ vanishes in \cite{EulerHeisenberg}, in contrast to the parametrix, \cf \eqref{eq:V_1}. This seems to stem from \cite{Heisenberg34}, where, for some unknown reason, only terms at most linear in the $\gamma$-matrices are considered.} However, the discussion of the ambiguities given there is not completely satisfactory, as the need for a covariant prescription seems not to be fully taken into account. Other definitions of the renormalized current one finds in the literature usually rely on the existence of a ground state, i.e., they require an ultrastatic spacetime with time-independent background fields \cite{DoschMuller}. The usual method to compute back-reaction effects is via effective actions, \cf \cite{DittrichReuter, Dunne04} for an overview. In this approach, the dependence on the state seems obscure.
We plan to address the issue of consequences of the local definition in a forthcoming joint work with M.~Wrochna.
\end{remark}

\subsection{The stress-energy tensor}
\label{sec:StressEnergy}

The renormalization freedom of Wick powers was used by Hollands and Wald to construct a conserved stress-energy tensor in the scalar case \cite{HollandsWaldStress}. Here, we perform the analogous analysis for the case of charged Dirac fields.

The first thing to notice is that the stress-energy tensor is in general only conserved if all fields are on-shell. Unless we are given a Lagrangean for the Yukawa background field $m$, variation \wrt $m$ leads to $\tr \psi^+ \psi = 0$. We thus have two choices: Either we assume that background fields are absent, with the possible exception of a constant mass (which does not lead to problems with the stress-energy tensor). Or we assume that the background fields are equipped with some Lagrangean. But then the coupling to the Dirac fields should be treated perturbatively, as otherwise terms involving the Wick square $\psi^+ \psi$ would enter the equation of motion of the background fields. But this reduces us to the first case for the free theory.

Hence, let us consider a charged Dirac field with a possibly non-zero mass $m$ in a gauge field background with vanishing curvature. The stress-energy tensor for this field is given by \cite{ForgerRomer}
\begin{multline}
\label{eq:T_munu}
 T_{\mu \nu} = \tr \Big[ \tfrac{1}{2} \left( \nabla_{(\mu} \psi^+ \gamma_{\nu)} \psi - \psi^+ \gamma_{(\mu} \nabla_{\nu)} \psi \right) \\ - \tfrac{1}{2} g_{\mu \nu} \left( \nabla_\lambda \psi^+ \gamma^\lambda \psi - \psi^+ \gamma^\lambda \nabla_\lambda \psi + 2 m \psi^+ \psi \right) \Big],
\end{multline}
where the trace is over gauge and spinor indices.
In terms of the Wick squares defined above,
this may be written as
\begin{multline*}
 T_{\mu \nu} = \tr \Big[ \tfrac{1}{2} \left( \gamma_\nu \Psi_\mu + \gamma_\mu \Psi_\nu \right) - \tfrac{1}{4} \left( \gamma_\mu \nabla_\nu \Psi + \gamma_\nu \nabla_\mu \Psi \right) \\
 - g_{\mu \nu} \left( \gamma^\lambda \Psi_\lambda - \tfrac{1}{2} \gamma^\lambda \nabla_\lambda \Psi + m \Psi \right) \Big].
\end{multline*}
For its divergence and trace, we obtain
\begin{align*}
 \nabla^\mu T_{\mu \nu} 
 & =  \tr \left[ \tfrac{1}{2} \left( \gamma_\nu \nabla^\mu \Psi_\mu + \gamma^\mu \nabla_\mu \Psi_\nu - 2 \gamma^\mu \nabla_\nu \Psi_\mu \right) \right. \\
 & \quad \left. - \tfrac{1}{4} \left( \gamma_\nu \nabla^\mu \nabla_\mu \Psi - \gamma^\mu \nabla_\nu \nabla_\mu \Psi + \gamma^\mu R_{\mu \nu} \Psi \right) - m \nabla_\nu \Psi \right], \\
 g^{\mu \nu} T_{\mu \nu} & = \tr \left[ (1-n) \gamma^\mu \Psi_\mu - \tfrac{1}{2}(1-n) \gamma^\mu \nabla_\mu \Psi - n m \Psi \right].
\end{align*}
Here we used
\[
 [\nabla_\mu, \nabla_\nu] \Psi = \mathfrak{R}_{\mu \nu} \Psi - \Psi \mathfrak{R}_{\mu \nu},
\]
where $\mathfrak{R}$ is the spin curvature tensor, which fulfills \cite{HackThesis}
\[
 \mathfrak{R}_{a b} \gamma^b = - \gamma^b \mathfrak{R}_{a b} = \tfrac{1}{2} R_{a b} \gamma^b.
\]

As for the divergence of the current, the divergence of the stress-energy tensor is a c-number modulo a weakly vanishing functional. Let us assume that this c-number is of the form $\nabla^\mu Q_{\mu \nu}$, for $Q_{\mu \nu}$ symmetric and locally and covariantly constructed. For $n=4$ this follows from the results of \cite{DHP09}. For the generic case, it was conjectured in \cite{HollandsWaldStress} that this is the case for all parity preserving models\footnote{There are counterexamples in parity violating theories, \cf \cite{AlvarezGaumeWitten}. The fact that these are all possible purely gravitational anomalies \cite{BrandtDragonKreuzer90} suggests that this is indeed fulfilled.}. To achieve a conserved stress-energy tensor, one may then use the redefinition \eqref{eq:deltaH_nabla} of the parametrix to modify
\[
 \Psi_\mu \to \Psi_\mu - N^{-1} 2^{-[n/2]} \left( \gamma^\nu Q_{\mu \nu} - \tfrac{1}{n-1} \gamma_\mu Q^\lambda_\lambda \right),
\]
where $N$ is the dimension of the gauge representation. Note that such a redefinition does not affect the current, so both current and stress-energy conservation can be achieved. Also note that there are no restrictions on the dimension $n$, in contrast to the scalar case \cite{HollandsWaldStress}, where one has $n-2$ in the denominator, so that one can achieve conservation only for $n>2$.
%Hence, we have proven:
%\begin{proposition}
%There are no algebraic obstructions to achieving a conserved stress energy tensor, in arbitrary dimension $n$.
%%The renormalization freedom of Wick powers allows to achieve a conserved stress energy tensor, in arbitrary dimension $n$.
%\end{proposition}
Hence, we have shown that there are no algebraic obstructions to achieving a conserved stress energy tensor, in arbitrary dimension.
If the above assumption is valid, as for $n=4$, this implies that the Wick powers may indeed be modified such that the stress-energy tensor is conserved in any dimension.

\begin{remark}
There is another prescription for obtaining a conserved stress-energy tensor, due to Moretti \cite{MorettiStressEnergy}. There, one directly changes the stress-energy tensor by adding a Wick monomial that vanishes on-shell. In the scalar theory, one uses
\[
 T'_{ab} = T_{ab} + c g_{ab} \varphi P \varphi,
\]
where $P$ is the wave operator. In the case of the Dirac field, this was adapted as \cite{DHP09}
\[
 T'_{ab} = T_{ab} + c g_{ab} \psi^+ D \psi.
\]
While the two methods give the same expectation values of the stress energy tensor, and thus are equivalent for the purpose of discussing the semi-classical Einstein equation, there are important conceptual differences. As noted in \cite{HollandsWaldStress}, it seems highly unlikely that Moretti's description can be generalized to the interacting case, in contrast to the method of Hollands and Wald. In particular, the redefinition of the Wick powers such that the stress energy tensor is conserved is the first step in constructing time-ordered products that fulfill the principle of perturbative agreement\footnote{It states that the physics is independent of the choice of a background, i.e., of the split of the action into a free and an interacting part (provided the free part is at most quadratic in the fields).} of \cite{HollandsWaldStress}. Such a choice of time-ordered products will automatically ensure the conservation of the stress-energy tensor also in the interacting case. The fact that in the two-dimensional scalar case a conserved stress-energy tensor can not be achieved by a redefinition of Wick powers, whereas no such restriction exists for Moretti's description, further shows that the two methods are not equivalent.
\end{remark}

Let us close this section by discussing the remaining renormalization freedom for $n=4$. After achieving a conserved stress-energy tensor, the remaining freedom must preserve this conservation. Hence, it may only be modified as
\begin{equation}
\label{eq:T_Ambiguity}
 T_{\mu \nu} \to T_{\mu \nu} + \beta_0 I_{\mu \nu} + \beta_1 J_{\mu \nu}, + \beta_2 m^2 G_{\mu \nu} + \beta_3 m^4,
\end{equation}
where $I_{\mu \nu}$ and $J_{\mu \nu}$ are the two linearly independent conserved curvature tensors of dimension $4$ (obtained by variation \wrt $g^{\mu \nu}$ of $R^2$ and $R_{\mu \nu} R^{\mu \nu}$). Such changes may indeed be achieved, by the redefinition (again performed via the redefinition \eqref{eq:deltaH_nabla} of the parametrix)
\[
 \Psi_\mu \to \Psi_\mu + \tfrac{1}{4} \gamma^\nu \left( \delta T_{\mu \nu} - \tfrac{1}{3} g_{\mu \nu} \delta T^\lambda_\lambda \right).
\]
Hence, one has the same renormalization ambiguities of the stress-energy tensor as for scalar fields, as conjectured in \cite{DHP09}.
In particular, these suffice to cancel the term $\Box R$ in the trace anomaly \cite{DHP09} (as $I_{\mu \nu}$ and $J_{\mu \nu}$ have trace proportional to $\Box R$).
However, let us note that if we treat the Yukawa background field completely perturbatively, then $m = 0$ and the last two terms in \eqref{eq:T_Ambiguity} are absent. These are replaced by one new ambiguity, namely the stress-energy tensor of the background Yukawa field, at zeroth order in perturbation theory. Similarly, one may add a multiple of the stress-energy tensor of the gauge background field.

\subsection{Scaling behavior}
\label{sec:Scaling}
We briefly comment on how a variant of the background field method \cite{Honerkamp72} can be used to determine the scaling behavior or renormalization group flow at $\order(\hbar)$. For even dimension $n$, one has a non-trivial scaling behavior of Wick powers, since, as discussed in Remark~\ref{rem:Scaling}, the parametrix involves a logarithmic term, which necessitates the choice of a scale $\Lambda$. But due to local covariance, this choice must be done simultaneously on all backgrounds. Hence, for a Wick square $\Psi$, we in general have
\begin{equation}
\label{eq:NontrivialScaling}
 S_\lambda \Psi = \lambda^{d_\Psi} \Psi + \hbar r \log \lambda,
\end{equation}
where $r$ is a local covariant object, and $d_\Psi$ is the scaling dimension of $\Psi$. In order to interpret this result, consider the backgrounds $(M, g, A, m)$ and $(M, \lambda^2 g, A, \lambda^{-1} m)$ as described above \eqref{eq:Def_sigma}. The choice of a definition of a Wick square $\Psi$ should correspond to the design of a corresponding measurement apparatus. This apparatus involves a linear length $L$, which by definition is the same on all backgrounds. Now the conformal map $(M, g) \to (M, \lambda^2 g)$ maps the apparatus to one of length $\lambda L$. Hence, comparing $\Psi$ and $S_\lambda \Psi$ amounts to comparing two definitions of $\Psi$ related by a different choice of a length scale. This is obviously in close analogy to the comparison of field theories defined at different renormalization scales, which is the idea underlying the Callan--Symanzik equation. The difference is that in the present setting, it already applies to Wick powers. We refer to \cite{HollandsWaldRG, BDF09} for a deeper discussion of the connection of scaling to the usual notions of the renormalization group flow.

As noticed in the preceding subsection, ultimately the background fields should be determined dynamically, i.e., they should be given some Lagrangean, which, for the sake of simplicity, we assume to be free. The coupling to the Dirac fermions is now an interaction term. Hence, we split the Yukawa and the gauge field into a free and an interacting part, indicated by subscripts $0$ and $1$, respectively:
\begin{align*}
 m & = m_0 + m_1, & A^\mu & = A^\mu_0 + A^\mu_1. 
\end{align*}
The fields $m_1$ and $A^\mu_1$ will be quantized.
We can split the Lagrangean into a free part $L_0$ (involving $m_1$ and $A^\mu_1$ at most quadratically), and an interaction part, given by
\begin{equation}
\label{eq:L_1}
 L_1 = m_1 \psi^+ \psi + i A_{1 \mu} \psi^+ \gamma^\mu \psi.
\end{equation}
In $L_0$, no coupling of $m_1$ or $A_1$ to the Dirac fermion is present, so in particular the parametrix will not contain couplings between these field. Hence, as the fields $m_1$ and $A_{1 \mu}$ enter linearly in \eqref{eq:L_1}, the anomalous scaling of this expression is completely determined by that of the Wick squares $\psi^+ \psi$ and $\psi^+ \gamma^\mu \psi$ (where a trace is understood).

In four spacetime dimensions, the auxiliary Hadamard parametrix is formally given by
\begin{equation*}
 h^\pm_\Lambda(x,x') = \frac{1}{16 \pi^2} \lim_{\varepsilon \to \pm 0} \left( 4 \frac{V_0(x,x')}{\Gamma_\varepsilon(x,x')} + \log \frac{-\Gamma_\varepsilon(x,x')}{\Lambda^2} V(x,x') \right),
\end{equation*}
where
\begin{equation}
\label{eq:V_Expansion}
 V = \sum_{k=0}^\infty \frac{1}{2^{2k}(k+1)! k!} \Gamma^{k} V_{k+1} .
\end{equation}
Noting that the Hadamard parametrix is obtained by applying $D$, we see that in order to compute the scaling behavior of the above expressions, we have to know the coinciding point limit of $V_1$ up to the first order derivative. For these, we obtain, for the case of electrodynamics ($G = U(1)$ and the fundamental representation)
\begin{align}
\label{eq:V_1}
 [V_1] & = - \tfrac{1}{12} R - \tfrac{i}{4} [\gamma^\lambda, \gamma^\rho] F_{\lambda \rho} - (n-1) m^2, \\
 [\tilde \nabla_\mu V_1] & = - \tfrac{1}{24} \nabla_\mu R - \tfrac{i}{8} [\gamma^\lambda, \gamma^\rho] \nabla_\mu F_{\lambda \rho} - (n-1) m \del_\mu m \nonumber \\
 & \quad - \tfrac{1}{6} \tilde \nabla^\lambda(\mathfrak{R}_{\mu \lambda} - i F_{\mu \lambda} - \gamma_\lambda \del_\mu m + \gamma_\mu \del_\lambda m + m^2 [\gamma^\mu, \gamma^\nu] ), \nonumber
\end{align}
%\begin{align}
%\label{eq:V_1}
% [V_1] & = - \tfrac{1}{12} R - \tfrac{i}{4} [\gamma^\lambda, \gamma^\rho] F_{\lambda \rho} - m^2 + \gamma^\lambda \del_\lambda m, \\
% [\nabla_\mu V_1] & = - \tfrac{1}{24} \nabla_\mu R - \tfrac{i}{8} [\gamma^\lambda, \gamma^\rho] \nabla_\mu F_{\lambda \rho} - m \del_\mu m + \tfrac{1}{2} \gamma^\lambda \nabla_\mu \del_\lambda m \nonumber \\
% & \quad - \tfrac{1}{6} \nabla^\lambda(\mathfrak{R}_{\mu \lambda} - i F_{\mu \lambda}), \nonumber
%\end{align}
where $\mathfrak{R}$ is the spin curvature and the square brackets denote the coinciding point limit. For simplicity, we used $m$, $A$ instead of $m_0$, $A_0$. The coefficient $r$ of non-trivial scaling in \eqref{eq:NontrivialScaling} is now proportional to (here one uses $[\mathfrak{R}_{\mu \nu}, \gamma_\lambda] = R_{\mu \nu \rho \lambda} \gamma^\rho$, \cf \cite[Lemma~I.2.2.9]{HackThesis})
\[
 r \sim 2 m_1 \nabla^\lambda \nabla_\lambda m_0 - \tfrac{1}{3} R m_0 m_1 - 4 m_1 m_0^3 + \tfrac{4}{3} A_1^\mu \nabla^\lambda F_{0, \lambda \mu}.
\]
This is, up to total derivatives, the expansion to linear order in $m_1$, $A^\mu_1$ of
\[
 - \nabla^\lambda m \nabla_\lambda m - \tfrac{1}{6} R m^2 - m^4 - \tfrac{1}{3} F^{\mu \nu} F_{\mu \nu},
\]
which is the Lagrangean for a conformally coupled scalar $m^4$ theory and a Yang--Mills Lagrangean. Up to the purely gravitational terms (which we can not obtain here, unless we also split the metric), this coincides with the $a_4$ term of the bosonic part of the spectral action of Chamseddine and Connes \cite{ChamseddineConnesSA}. In the present setting we obtain it from the fermionic part through scale transformations, on generic globally hyperbolic spacetimes (in contrast to the compact Riemannian spaces needed for the spectral action). We note that working on compact Riemannian spaces and using a cut-off, all terms of the bosonic part of the spectral action can be obtained by scale transformations \cite{AndrianovLizziKurkov11}.

In a forthcoming publication, we will examine this further, showing that if a generalization of the principle of perturbative agreement \cite{HollandsWaldStress} to the case of gauge backgrounds holds, then the fermionic contribution to renormalization group flow at the one-loop level can indeed be calculated as sketched above. Noting that the coinciding point limits of the Hadamard coefficients are related to the (Euclidean) Seeley-deWitt coefficients appearing in the heat kernel expansion, this establishes a connection to the heat kernel method. See \cite{HackMoretti} for related discussions.

%It would certainly be desirable to better understand this, i.e., to establish a proof that the above method can really be used to calculate the renormalization group flow of the interacting model at the one loop level. It seems that this would be a prerequisite for the fulfillment of the principle of perturbative agreement introduced by Hollands and Wald \cite{HollandsWaldStress}. 

\subsection*{Acknowledgments}
I would like to thank Dorothea Bahns, Thomas-Paul Hack, Harold Steinacker, and especially Micha{\l} Wrochna for helpful discussions and remarks.
I am very grateful to Kartik Prabhu for communicating a mistake in an earlier version of the manuscript and for discussions on this point.
I am also grateful to an anonymous referee for a very careful reading of the manuscript.
A large part of this work was carried out at the Courant Research Centre ``Higher Order Structures'' at the University of G\"ottingen.
This work was supported by the German Research Foundation (Deutsche Forschungsgemeinschaft (DFG)) through the Institutional Strategy of the University of G\"ottingen and by the Austrian Science Fund (FWF) under the contract P24713.

\appendix

\section{The $\Spin$ group}
\label{app:Spin}

We recall some basic material on the $\Spin$ group, \cf \cite{LawsonMichelsohn,StrohmaierSemiRiemannian} for more details. We denote by $\Cl(n)$ the real Clifford algebra corresponding to the bilinear form $\eta$ with signature $(-, +, \dots, +)$ on $\R^n$, i.e., the algebra generated by the identity $\1$ and elements $e^\mu$ subject to
\[
 \{ e^\mu, e^\nu \} = 2 \eta^{\mu \nu} \1.
\]
By defining the involution ${e^\mu}^* = - e^\mu$, one obtains the complexified Clifford algebra $\Cl^c(n)$. There is an algebra isomorphism from $\Cl^c(n)$ to $\Mat_\C(2^{[n/2]})$ for even $n$ and to $\Mat_\C(2^{[n/2]}) \oplus \Mat_\C(2^{[n/2]})$ for odd $n$ (note that this is not yet a $*$ isomorphism), \cf \cite{StrohmaierSemiRiemannian} for a concrete realization. To obtain an irreducible representation, one restricts to the first summand for odd $n$. One equips $\C^{2^{[n/2]}}$ with the inner product
\[
 (v, w) = - i \skal{v}{\gamma^0 w}_{\C^{2^{[n/2]}}},
\]
where $\gamma^\mu$ is the image of $e^\mu$ in this representation and $\skal{\cdot}{\cdot}_{\C^{2^{[n/2]}}}$ is the standard inner product on $\C^{2^{[n/2]}}$. With this inner product, $\C^{2^{[n/2]}}$ is a Krein space and the representation is a $*$ representation (mapping the involution to the Krein-adjoint). The inner product is invariant under the identity component $\Spin_0(n)$ of the $\Spin(n)$ group, defined as
\[
 \Spin(n) \doteq \{ s \in \Cl(n)| s = u_1 \dots u_{2k}, u_i \in \R^n, u_i^2 = \pm 1 \},
\]
where we identified $\R^n$ with a subspace of $\Cl(n)$ via $v_\mu \tilde e^\mu \to v_\mu e^\mu$, with $\{ \tilde e^\mu \}$ an orthonormal basis of $\R^n$.
%Its identity component is denoted by $\Spin_0(n)$.
There is a canonical homomorphism from $\Spin_0(n)$ to the identity component $\Lor_0$ of the Lorentz group $SO(n-1, 1)$. For $n>2$, this is a double covering, whereas for $n=2$, both groups are isomorphic to $\R$. The restriction of the above irreducible representation of $\Cl^c(n)$ to $\Spin_0(n)$ is the \emph{spinor representation}. It is irreducible for odd $n$ and reducible for even $n$, decomposing into two irreducible chiral representations.

\section{Deformation of the background}
\label{app:Deformation}

\begin{proof}[Proof of Proposition~\ref{prop:Deformation}]
By \cite{BernalSanchez03, BernalSanchez05}, $M$ is diffeomorphic to $\R \times \Sigma$, with a smooth time function $t$. We define $M', \tilde M = M$ as smooth oriented manifolds. Similarly, we define $P', \tilde P = P$ and $SM', S\tilde M = SM$ as smooth principal bundles. With the induced metric $h$, $\Sigma$ is a Riemannian manifold, and there exists a Riemannian metric $\tilde h$, conformal to $h$, such that $(\Sigma, \tilde h)$ is complete \cite{NomizuOzeki}. We define $\tilde M = \R \times \Sigma$ with the metric $\tilde g = - \ud t^2 \otimes \tilde h$. By \cite[Thm.~2.54]{BeemEhrlich}, it is globally hyperbolic. Now one proceeds as in \cite[Prop.~C.1]{FNW81} to define a metric on $M'$ that interpolates between $\Sigma$ (at $t=0$) and $\tilde \Sigma$ (at $t=-1$). Both $M'$ and $\tilde M$ inherit the time-orientation from $M$ (by the orientation of $\del_t$).

Regarding the spin structure, we note that for $n>2$, spin structures are classified (up to equivalence) by $H^1(M ; \Z_2)$, i.e., by assigning a sign to each nontrivial cycle, indicating whether in the covering of the frame bundle by the spin bundle one changes the sheet when following the cycle \cite[Thm.~1.7]{LawsonMichelsohn}.
This is purely topological, so by choosing the same assignment as for $(SM,\pi_S)$, we define the spin structures $(SM', \pi'_S)$ and $(S\tilde M, \tilde \pi_S)$.\footnote{For a concrete prescription of how to change the spin projection under deformations of the metric, \cf the discussion preceding the introduction of the scaling transformation in Section~\ref{sec:Fields}.}
For $n=2$, $\Spin_0$ and $\Lor_0$ are isomorphic, so the spin structure is unique (up to equivalence).
%Extending a section $s_i$ of $S \tilde M|_{\tilde U_i}$ by parallel transport along the time direction to all of $U_i$, one obtains the section $\tilde s_i$.
%Extending a section $s$ of $S\tilde M|_\Sigma$ by parallel transport along the time direction to all of $\tilde M$, one obtains the section $\tilde s$.

%It remains to discuss the connections $\tilde A$, $A'$.
To construct $\tilde A$, take some connection $\hat A$ on $\tilde P|_{\tilde \Sigma}$ (existence is guaranteed by \cite[Thm.~II.2.1]{KobayashiNomizu}). Choose an open cover $\{ U_i \}$ of $\tilde \Sigma$, where each $U_i$ is topologically trivial, and corresponding local sections $\hat s_i$. In a pull-back \wrt these, the connection $\hat A$ is of the form $\hat s_i^* \hat A(x) = \hat A_{i, a}(x) \ud x^a$, where $x^a$ are local coordinates on $U_i$. Choose some equivariant lift $v$ of $\del_t$ to $\tilde P$ and extend the sections $\hat s_i$ to sections $s_i: \R \times U_i \to \tilde P$ by taking the integral curves $c_i(x, t)$ of $\hat s_i(x)$ \wrt $v$ and defining $s_i(t, x) = c_i(x,t)$. By construction, $v|_{s_i(t,x)} = s_{i *} \del_t|_{(t,x)}$. Then define the connection $\tilde A$ by its pull-backs $s_i^* \tilde A(t,x) = \hat A_{i, a}(x) \ud x^a$. By \cite[Prop.~II.1.4]{KobayashiNomizu}, this defines a connection on $\tilde P$, as $\hat A$ is a connection and the transition functions $\psi_{ij}: \R \times (U_i \cap U_j) \to G$ corresponding to the sections $s_i$ are time-independent, by construction. Furthermore, the horizontal lift of $\del_t$ \wrt $\tilde A$ is $v$ (by equivariance, it suffices to show that $\tilde A(v|_{s_i(t,x)}) = 0$, for all $t, x$ which follows from $v|_{s_i(t,x)}$ being the push-forward of $\del_t|_{(t,x)}$ along $s_i$ and the definition of $s_i^* \tilde A$). It remains to show that $(\Lie_v \tilde A)(w) = 0$ for all $w \in T \tilde P$. By equivariance, it suffices to consider $w \in T \tilde P|_{s_i(t,x)}$. If $w$ is vertical, the equality follows from standard arguments, in particular that the Lie bracket of a vertical and a horizontal vector field is horizontal. If $w$ is horizontal, we can decompose it into a vertical vector and a vector which is the push-forward of a vector $u \in T\tilde M$ along $s_i$. For this component, we have
\[
 (\Lie_v \tilde A)(s_{i *} u) = (s_i^* \Lie_v \tilde A)(u) = (\Lie_{\del_t} s_i^* \tilde A)(u),
\]
which vanishes by the definition of $s_i^* \tilde A$. In the last step, we used $s_i \circ \phi^\tau_{\del_t} = \phi^\tau_v \circ s_i$, where $\tau \mapsto \phi^\tau_v$ is the one-parameter family of diffeomorphisms generated by $v$.

The interpolating connection $A'$ can now be defined as
\[
 A'(p) =  f(t(p)) i^*_P A + (1-f(t(p))) \tilde \imath^*_P \tilde A,
\]
%\[
% s_i^* A'(t,x) =  f(t) s_i^* A(t,x) + (1-f(t)) s_i^* \tilde A_i(t,x),
%\]
where $f \in C^\infty(\R, [0,1])$ and $f(t) = 1$ for $t > -1/4$ and $f(t) = 0$ for $t < - 3/4$. Here $i_P$ and $\tilde \imath_P$ are the bundle isomorphisms $i_P: P' \to P$, $\tilde \imath_P: P' \to \tilde P$.
For the Yukawa field $m'$, one proceeds in the obvious way.
\end{proof}

%\bibliography{../../mybib}{}
%%%%\bibliographystyle{amsalpha}
%\bibliographystyle{../../h-elsevier_new}

\end{document}